\newcommand{\xmean}{\bar{x}}
\newcommand{\valueset}{C_M}
\newcommand{\valuescheme}{T_M}
\newcommand{\Iupper}{\Bar{I}}
\newcommand{\Iuppermean}{\Tilde{\mu}_{\Iupper}}
\newcommand{\Iuppervar}{\Tilde{\sigma}_{\Iupper}}
\newcommand{\samplemean}{\Hat{\mu}_{\Iupper}}
\newcommand{\samplevar}{\Hat{\sigma}_{\Iupper}}
\title{Probing the Information Theoretical Roots of Spatial Dependence Measures} 
\author{Zhangyu Wang}{University of California Santa Barbara, Santa Barbara, CA, United States}{zhangyuwang@ucsb.edu}{https://orcid.org/0009-0004-4728-4458}{This work was supported by the National Science Foundation under Grant No. 2033521 A1 – KnowWhereGraph: Enriching and Linking Cross-Domain Knowledge Graphs using Spatially-Explicit AI Technologies.}
\author{Krzysztof Janowicz}{Faculty of Geosciences, Geography and Astronomy, University of Vienna, Austria \and University of California, Santa Barbara, CA, USA}{krzysztof.janowicz@univie.ac.at}{[orcid]}{}
\author{Gengchen Mai}{SEAI Lab, Department of Geography and the Environment, University of Texas at Austin, TX, USA \and Department of Geography, University of Georgia, GA, USA}{gengchen.mai@austin.utexas.edu}{https://orcid.org/0000-0002-7818-7309}{}
\author{Ivan Majic}{University of Vienna, Austria}{ivan.majic@univie.ac.at}{https://orcid.org/0000-0002-0834-3791}{}
\authorrunning{Z. Wang, K. Janowicz, G. Mai and I. Majic} 
\keywords{Spatial Autocorrelation, Moran's I, Information Theory, Surprisal, Self-Information} 
\begin{document}
\nolinenumbers
\sloppy

\maketitle

\begin{abstract}Intuitively, there is a relation between measures of spatial dependence and information theoretical measures of entropy. For instance, we can provide an intuition of why spatial data is special by stating that, on average, spatial data samples contain less than expected information. Similarly, spatial data, e.g., remotely sensed imagery, that is easy to compress is also likely to show significant spatial autocorrelation. Formulating our (highly specific) core concepts of spatial information theory in the widely used language of information theory opens new perspectives on their differences and similarities and also fosters cross-disciplinary collaboration, e.g., with the broader AI/ML communities. Interestingly, however, this intuitive relation is challenging to formalize and generalize, leading prior work to rely mostly on experimental results, e.g., for describing landscape patterns. In this work, we will explore the information theoretical roots of spatial autocorrelation, more specifically Moran's I, through the lens of self-information (also known as surprisal) and provide both formal proofs and experiments.
\end{abstract}

\section{Introduction}
To explain why \textit{spatial is special} we often list the characteristics underlying spatial data and the processes that created them by pointing to classics such as the modifiable areal unit problem (MAUP) \cite{fotheringham1991modifiable}, spatial dependence and interaction, scale, edge effects, and so forth. However, there are many alternative formulations, some of which draw a more direct relation to our neighboring academic disciplines. For instance, when explaining Tobler's 
First Law of Geography \cite{tobler1970computer,tobler2004first} to our colleagues in the broader AI/ML community by introducing terms such as spatial dependence and spatial autocorrelation measures such as Moran's I, we could instead state that \textit{spatial is special because on average a sample of spatial data contains less than expected information.}

While highlighting different aspects, e.g., omitting the explicit \textit{nearness} from the original definition, such an information theoretic perspective is in many ways equivalent, yet opens up different avenues for understanding why spatial (data) is special from the viewpoint of neighboring but substantially larger disciplines without the need to introduce our own terminology. The information-theoretic statement above also gives rise to an entropy-based understanding of spatial dependence that translates seamlessly into common loss functions in machine learning, such as cross-entropy for classification.

Similar observations be them in space, time, or spacetime can be made about information compression. Intuitively, data, e.g., remotely sensed imagery, with a high degree of spatial autocorrelation should be easier to compress than data with almost no spatial dependence. What is true for simply run-length compression or quad-trees also holds for more abstract situations. For instance, according to lossless compression techniques such as Huffman coding \cite{moffat2019huffman}, daily weather reports (sunny, cloudy, rainy, ...) for Santa Barbara can utilize fewer bits of information than for Vienna. Conversely, very compressible data is also more likely to show strong spatial autocorrelation.

These thoughts do not imply that measures of information compression or entropy can (or should) replace our domain-specific measures such as Moran's I, Geary’s C, semivariograms, and so forth, but that it is worth exploring their commonalities (and differences). For instance, while high (Shannon) entropy implies low spatial autocorrelation, measures such as Moran's I also have an explicit notion of neighborhood encoded via their weights matrix. Hence, a binary checkboard pattern would yield $I= -1$ while a binary partition in two areas would approximate $I \approx 1$. From an entropy perspective, these two patterns are similar as the proportion of blacks and whites, e.g., cells, remains the same. Similar arguments can be made for the compression examples. However, adding the required neighborhood notion to (discrete) entropy is possible, as will be shown below for self-information (i.e., \textit{surprisal}) for the case of classed data, be they rasters or vectors. 

Interestingly, while the relationship between information theory and (spatial) autocorrelation has been noted by others before, the formalization is surprisingly challenging, leading most prior work to take a largely experimental stance. In our work here, we will provide both an experimental intuition and more formal proofs for the proposed \textit{spatial self-information}. Summing up, exploring the information-theoretical roots of spatial dependence and, more specifically, Moran's I is worthwhile for at least the \textbf{following reasons}:

\begin{enumerate}
    \item \textbf{Fostering cross-disciplinarity}: Our community has shown that spatially explicit machine learning models do not only increase the accuracy of (Geo)AI models when applied to geographic data but also inform and improve more general models in various domains \cite{mac2019presence,mai2020multi,mai2022review,mai2023sphere2vec,cole2023spatial,vivanco2023geoclip,russwurm2023geographic}, e.g., leading to a broad interest in location encoding methods outside of GeoAI. Conversely, researchers from the broader AI community \cite{mehrabi2021survey} try to utilize notions such as the MAUP to study problematic coverage and representation biases in training data for image-based foundation models. Yet collaboration and reuse of prior results are sometimes hindered by our highly specific terminology and methods. Casting spatial core concepts in the shared language of information theory may mitigate these issues and also accelerate progress.
    \item \textbf{Quantifying spatial patterns:} Incorporating results from information-theoretical and physical entropy (and related ideas) may open up new avenues to describe complex spatial patterns beyond what is currently available in our spatial analytics toolbox as recently demonstrated by the use of configurational entropy for complex landscapes \cite{cushmanCalculationConfigurationalEntropy2018}.
    \item \textbf{Spatial Data Science education:} Most introductory textbooks on GIS, GIScience, or Spatial Data Science barely make a connection between spatial dependence, information (e.g., image) compression, information theory, and so forth while covering all of them to at least some extent.\footnote{O'Sullivan's and Unwin's \textit{Geographic Information Analysis} being a rare exception.} This makes it difficult for students to grasp the bigger picture, e.g., when meeting entropy again while studying spatial clustering and classification.
\end{enumerate}

\section{Motivation and Related Works} \label{sec:motivation}

Spatial autocorrelation has long been a research focus for both GIScience \cite{goodchild1986spatial,Getis2010} and statistics \cite{cliff1975model}. Efforts have been made to develop statistics that test whether a sample of spatially distributed data is autocorrelated, i.e., against the null hypothesis that the spatial arrangement of the data is randomly generated. Commonly used statistics include Moran's I \cite{moranNotesContinuousStochastic1950, wrigley1982spatial}, Geary's C \cite{geary1954contiguity}, and so forth. Apart from autocorrelation statistics, information-theory-based measures like Batty's spatial entropy~\cite{battySpatialEntropy1974} and S statistics~\cite{ceccatoNewInformationTheoretical}, which extends Moran's I by assuming the observed values are probabilities, have also been studied. However, how to relate these two types of measures lacks in-depth investigation.  

As discussed in the introduction, we are motivated to connect spatial autocorrelation statistics with information-theoretic quantities like entropy and self-information, for the sake of relating theoretical concepts from different disciplines and exploring wider applications (e.g., introducing spatial autocorrelation in loss functions). More specifically, we wish to quantify the self-information, i.e., the \textit{surprisal}, of observing a sample with a certain degree of spatial autocorrelation. The intuition is that higher spatial autocorrelation implies more regular spatial patterns, which is more surprising.

Unfortunately, research shows that there is no general relation between an autocorrelation statistic and its corresponding self-information \cite{chapeau2007autocorrelation}. The information-theoretic counterpart of a spatial autocorrelation statistic needs to be established case by case. In this paper, we aim at deriving that of the (global) Moran's I.

In general, we need to know the probability of observing a certain type of sample to obtain its surprisal. In our case, this means knowing the probability of observing a sample with a certain value of Moran's I. This is an under-studied topic due to the difficulty of deriving the analytical distribution of Moran's I. Instead, researchers use permutation inference to empirically compute the reference distribution. This is good enough for hypothesis tests, as we only need the $p$-values, but not enough for computing the self-information.

Attempts have been made to study the asymptotic behavior of Moran's I under the assumption of knowing the specific underlying stochastic process. For example, Kelejian et al \cite{HKELEJIAN2001219} derived the analytical distribution of Moran's I by assuming the spatial data are generated by a linear model, and the regression disturbance is a known priori or estimated from data. This assumption is reasonable in some areas such as economics, but not necessarily appropriate in geology, urban planning, landscape, remote sensing, etc. 

In many cases, the underlying stochastic process is unknown, and all we can rely on is a broad assumption of randomness. In this sense, it is a combinatorics problem with a strong relation to entropy. Some researches approach a simplified version of this problem via Shannon entropy of co-occurrence counts \cite{leibovici2009defining, nowosad2019information}. They only consider categorical differences, i.e., whether neighboring observations are of the same class, without addressing the numerical differences that are present in classic spatial autocorrelation statistics. Cushman \cite{cushmanCalculationConfigurationalEntropy2018} took an important step in incorporating numerical differences by empirically revealing that the distribution of the total weighted distance between different-valued cells on a grid approximates the distribution of Moran's I. We are inspired by this observation and have decided to provide a more formal analytical analysis. 

In short, we prove why Moran's I asymptotically follows a normal distribution, and its analytical form can be specified without sampling and estimation. This allows efficient and accurate calculation of the probability, consequently the self-information, of any spatial sample with a binary weight matrix. Without the assumptions of the underlying stochastic process, this self-information of spatial autocorrelation can be applied to a wider range of fields and is computationally friendly to be combined with learning algorithms.

\section{Method}

In this section we provide a comprehensive analysis of the asymptotic distribution of Moran's I with binary weights. First we formally define the problem we want to solve and necessary notations in Section \ref{sec:problem-setup}, elaborate the proof from Section \ref{sec:rearrange} to Section \ref{sec:analytical}, and finally reach the core results of this research, i.e., the analytical approximation of Moran's I distribution specified in Theorem \ref{theorem:approx-mean-theorem} and Theorem \ref{theorem:approx-variance-theorem}.

\subsection{Problem Setup} \label{sec:problem-setup}

Consider a sample of $N$ indexed observations $\{x_1, x_2, \cdots x_{N}\}$. We assume the observations take limited discrete \textit{values}, i.e., $\forall i, x_i \in \valueset := \{ c_1, c_2, \cdots c_M \} \text{, where } M \ll N$. Next, we define \textit{value size} as $n_{p} := |\{x_i|x_i = c_p\}|$, i.e., the number of observations in the sample whose values are $c_p$. We call the set $\valuescheme := \{(c_1, n_1), (c_2, n_2), \cdots (c_M, n_M)\}$ the \textit{value scheme} of a sample, i.e., the discrete values and their numbers of occurrences in the sample. The value scheme measures the intrinsic variance in the sample itself regardless of the spatial arrangement. It uniquely determines the sample mean $\xmean = \sum_{i=1}^N x_i / N = \sum_{p=1}^M c_p n_p / \sum_{p=1}^M n_p$ and the sample variance $\sum_{i=1}^N (x_i - \xmean)^2 / N = \sum_{p=1}^M (c_p - \xmean)^2 n_p / \sum_{p=1}^M n_p$.


The binary spatial weight is defined as $w_{i,j} = \mathbb{I}\{x_i~\text{and}~x_j~\text{are neighbors}\}$. Then we can form a directed graph $G$ with $N$ vertices $V = \{v_1, v_2, \cdots v_N\}$ where $v_i$ corresponds to the $i$-th observation $x_i$, and $(v_i, v_j)$ is an edge if and only if $w_{i,j} = 1$. We further require that the degree of each vertex is a fixed number $k \ll N$ (e.g., for the conventional rook and queen weights, $k = 4$ and $k = 8$, respectively, ignoring the border and corner variables). 

Recall that Moran's I is defined as:

$$I = \dfrac{N}{\sum_{i=1}^N\sum_{j=1}^N w_{i,j}} \dfrac{\sum_{i=1}^N\sum_{j=1}^N w_{i,j} (x_i - \xmean) (x_j - \xmean) }{ \sum_{i=1}^N (x_i - \xmean)^2 }$$ 

Moran's I values of two samples are not directly comparable when their value schemes are vastly different. For example, a grid with randomly arranged high-variance values and the same grid with checkerboard arranged low-variance values both have Moran's I $\approx 0$. However, the latter case obviously demonstrates higher spatial autocorrelation and should be differentiated from the former. Therefore, throughout this paper, $\mathbb{P}(I=\alpha)$ (the probability of Moran's I being $\alpha$) always refers to the conditional probability $\mathbb{P}(I=\alpha | \valuescheme)$ (the probability of Moran's I being $\alpha$ given a known and fixed value scheme).

When $\valuescheme$ is known and fixed, $N = \sum_{p=1}^{M} n_p$, $\sum_{i=1}^N\sum_{j=1}^N w_{i,j} = kN$, $\xmean$ and $\sum_{i=1}^N (x_i - \xmean)^2$, are also known and fixed. Define $\Iupper := \sum_{i=1}^N\sum_{j=1}^N w_{i,j} (x_i - \xmean) (x_j - \xmean)$ as the \textit{unscaled Moran's I}. Then $I \propto \Iupper$. For the simplicity of discussion, in the following, we will focus on $\Iupper$.

Our purpose is to find the distribution of $\Iupper$ (and naturally that of $I$) so that for any $\alpha \in \mathbb{R}$ we know $\mathbb{P}(\Iupper = \alpha)$. Then we define \textit{spatial self-information} $J := - \log (\mathbb{P}(\Iupper = \alpha))$, which quantifies the surprisal of observing a sample with a certain degree of spatial autocorrelation.

To achieve this, we will prove that the distribution of $\Iupper$ can be asymptotically approximated by a sum of normal distributions in the rest of the section. The proof is structured as follows:

\begin{enumerate}
    \item First, in Section \ref{sec:rearrange}, we rearrange $\Iupper$ as a weighted sum of several random variables;
    \item Then, in Section \ref{sec:asymptotic-binary}, we prove that the distributions of these random variables asymptotically follow binomial/Poisson binomial distributions under certain necessary assumptions;
    \item Finally, in Section \ref{sec:normal-approx}, we show that these distributions can be approximated by normal distributions, which leads to the conclusion that $\Iupper$ also follows a normal distribution. We further derive the analytical form of this distribution in Section \ref{sec:analytical}.
\end{enumerate}

Following that, we develop a set of techniques in Section \ref{sec:accuracy-robustness} that correct the error caused by violations of assumptions and conditions made during the problem set-up and the proof. Experiments demonstrate that with these corrections our approximation is robust to relaxations. It enables practical application in real-world, i.e., non-ideal situations.

\subsection{Rearrangement of $\Iupper$} \label{sec:rearrange}

\begin{definition}
A tuple $(x_i, x_j)$ is called a $pq$-pair if $x_i=c_p, x_j=c_q, w_{i,j} = 1$. We say a $pq$-pair starts with $c_p$ and ends with $c_q$.
\end{definition}

\begin{definition}
$S_{p,q} := \{(x_i,x_j) | x_i = c_p,~x_j = c_q~\text{and}~w_{i,j} = 1\}$, $c_p, c_q \in \valueset$ is the set of pq-pairs. If $p \neq q$, we say $S_{p,q}$ is a different-value set; otherwise a same-value set.
\end{definition}

It is worth noting that the order of indices matter. In a same-value set $S_{p,p}$, $(x_i,x_j)$ and $(x_j, x_i)$ are counted as two different $pp$-pairs even though $x_i = x_j = c_p$.

\begin{lemma}[Rearrangement of $\Iupper$ as Cardinality of Sets]
\label{lemma:rearrangement}
$\Iupper$ is a weighted sum of the cardinality of all possible sets of $pq$-pairs. Specifically, 
$\Iupper = \sum_{p,q} (c_p - \xmean)(c_q - \xmean) |S_{p,q}|$.
\end{lemma}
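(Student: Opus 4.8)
The plan is to view $\Iupper$ as a finite sum over ordered index pairs and then to regroup that sum according to the pair of values attached to each neighbor pair. Because $w_{i,j}$ is a $\{0,1\}$-valued indicator, the double sum $\sum_{i,j} w_{i,j}(x_i-\xmean)(x_j-\xmean)$ keeps a nonzero term exactly when $(v_i,v_j)$ is an edge of $G$, i.e.\ exactly over the neighbor pairs $(x_i,x_j)$. So the first step is simply to rewrite $\Iupper$ as the sum of $(x_i-\xmean)(x_j-\xmean)$ taken over all neighbor pairs, with the indicator absorbed into the range of summation.

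Next I would partition the neighbor pairs by their value labels. Since every observation satisfies $x_i \in \valueset$, each neighbor pair $(x_i,x_j)$ determines a unique ordered pair of values $(c_p,c_q)$ with $x_i=c_p$ and $x_j=c_q$, and hence belongs to exactly one set $S_{p,q}$. Therefore the family $\{S_{p,q}\}_{1\le p,q\le M}$ is a disjoint cover of the set of all neighbor pairs. Here I would be careful to respect the ordering convention stated above the lemma: a pair and its reverse are treated as distinct, so $S_{p,q}$ and $S_{q,p}$ are genuinely different sets, and the same-value sets $S_{p,p}$ correctly count $(x_i,x_j)$ and $(x_j,x_i)$ separately.

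The third step exploits that the summand is constant on each block of the partition: if $(x_i,x_j)\in S_{p,q}$ then $(x_i-\xmean)(x_j-\xmean)=(c_p-\xmean)(c_q-\xmean)$, a quantity depending only on $(p,q)$. Summing this constant over the $|S_{p,q}|$ elements of the block contributes $(c_p-\xmean)(c_q-\xmean)\,|S_{p,q}|$, and summing over all blocks yields $\Iupper=\sum_{p,q}(c_p-\xmean)(c_q-\xmean)|S_{p,q}|$, as claimed.

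This argument is entirely a reindexing of a finite sum, so there is no analytic difficulty. The only point requiring genuine care---and hence the closest thing to an obstacle---is the bookkeeping around ordered versus unordered pairs: one must verify that the blocks $S_{p,q}$ are pairwise disjoint and jointly exhaust all neighbor pairs under the chosen orientation convention, which is precisely what guarantees that no term of the original double sum is double-counted or dropped.
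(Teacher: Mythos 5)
Your proposal is correct and follows essentially the same route as the paper's proof: both regroup the double sum $\sum_{i,j} w_{i,j}(x_i-\xmean)(x_j-\xmean)$ by the ordered value pair $(c_p,c_q)$, observe the summand is constant on each block $S_{p,q}$, and conclude by counting; the paper just phrases the partition via indicator functions $\mathbb{I}\{x_i=c_p, x_j=c_q, w_{i,j}=1\}$ rather than your explicit disjoint-cover language. Your extra attention to the ordered-pair convention matches the remark the paper makes just before the lemma, so nothing is missing.
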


\begin{proof}

$$
\begin{aligned}
& \sum_{i=1}^N\sum_{j=1}^N w_{i,j} (x_i - \xmean) (x_j - \xmean) \\
& = \sum_{p,q} \left[\sum_{i=1}^N\sum_{j=1}^N \mathbb{I}\{x_i=c_p, x_j=c_q\} w_{i,j} (x_i - \xmean) (x_j - \xmean)\right] \\
 & = \sum_{p,q} \left[\sum_{i=1}^N\sum_{j=1}^N \mathbb{I}\{x_i=c_p, x_j=c_q, w_{i,j}=1\} (x_i - \xmean) (x_j - \xmean)\right] \\
 & = \sum_{p,q} \left[\sum_{i=1}^N\sum_{j=1}^N \mathbb{I}\{x_i=c_p, x_j=c_q, w_{i,j}=1\} (c_p - \xmean) (c_q - \xmean)\right] \\
 & = \sum_{p,q} (c_p - \xmean) (c_q - \xmean) \left[\sum_{i=1}^N\sum_{j=1}^N \mathbb{I}\{x_i=c_p, x_j=c_q, w_{i,j}=1\}\right] \\
 & = \sum_{p,q} (c_p - \xmean) (c_q - \xmean) |S_{p,q}| \\
\end{aligned}
$$



\end{proof}

Now, if we know the distributions of each $|S_{p,q}|$, we can analyze the distribution of their weighted sum. We will prove that for different-value sets, $|S_{p,q}|$ asymptotically follows a binomial distribution, and for same-value sets, $|S_{p,q}|$ asymptotically follows a Poisson binomial distribution. 

\subsection{Asymptotic Binomial and Poisson Binomial Distributions} \label{sec:asymptotic-binary}

Let $B(p, n)$ denote the binomial distribution with probability of success $p$ and number of trials $n$, $PB(p_i; i=1,2,\cdots,n)$ denote the Poisson binomial distribution with probabilities of success $p_1, p_2, \cdots, p_n$, and $N(\mu, \sigma^2)$ denote the normal distribution with mean $\mu$ and variance $\sigma^2$.

\begin{lemma}[Probability of the Cardinality of Same-Value Sets]
\label{lemma:same-value-lemma}
If $kn_p \ll N$, then $\dfrac{1}{2}|S_{p,p}| \sim PB(\dfrac{k(t-1)}{N}; t = 1,\cdots, n_p)$,
with mean $\mu_{p,p} = \dfrac{1}{2}(n_p - 1)\dfrac{kn_p}{N}$ and variance $\sigma^2_{p,p} = \dfrac{1}{2}(n_p-1)\dfrac{kn_p}{N}\left[1 - \dfrac{k(2n_p-1)}{3N}\right]$.
\end{lemma}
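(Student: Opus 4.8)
The plan is to fix the value scheme $\valuescheme$ and work under the permutation null, i.e.\ to treat the assignment of values to the $N$ vertices as uniformly random subject to the fixed counts $n_1,\dots,n_M$. Conditioned on this, the positions of the $n_p$ observations of value $c_p$ form a uniformly random $n_p$-subset of the vertices. Since neighborliness is symmetric ($w_{i,j}=w_{j,i}$), each undirected edge whose endpoints both carry value $c_p$ contributes two ordered $pp$-pairs, $(x_i,x_j)$ and $(x_j,x_i)$, to $S_{p,p}$; hence $\tfrac{1}{2}|S_{p,p}|$ is exactly the number of monochromatic $c_p$-edges of $G$, which explains the factor $\tfrac{1}{2}$. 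By $k$-regularity, $G$ has $kN/2$ undirected edges in total, a fact I will use to count available neighbor slots.

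My main idea is a sequential exposure argument. I would reveal the $n_p$ marked (value-$c_p$) vertices one at a time, in uniformly random order, placing the $t$-th marked vertex uniformly among the positions not yet occupied by a marked vertex, and set $Y_t$ to be the number of edges joining the $t$-th marked vertex to the $t-1$ already-placed ones, so that $\tfrac{1}{2}|S_{p,p}|=\sum_{t=1}^{n_p}Y_t$. Conditioned on the first $t-1$ placements, the union of their neighborhoods occupies at most $k(t-1)$ unmarked positions, so the $t$-th vertex creates a new edge with probability $\approx \frac{k(t-1)}{N-(t-1)}\approx \frac{k(t-1)}{N}$, and conditioned on creating one it almost surely connects to exactly one marked vertex. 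Thus $Y_t$ is asymptotically $\mathrm{Bernoulli}\!\left(\frac{k(t-1)}{N}\right)$, and because this conditional law depends on the past only through the index $t$ to leading order, the $Y_t$ are asymptotically independent, which yields $\tfrac{1}{2}|S_{p,p}|\sim PB\!\left(\frac{k(t-1)}{N};\,t=1,\dots,n_p\right)$.

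I expect the delicate step to be the error control implicit in the word \emph{asymptotically}, which is exactly where the hypothesis $kn_p\ll N$ enters. Two events spoil the clean Bernoulli/independent picture: a freshly placed vertex adjacent to two or more already-marked vertices (forcing $Y_t\ge 2$), and overlaps among the neighborhoods of the marked set that shrink the true number of available neighbor slots below $k(t-1)$ and make it configuration-dependent. Since the marked set touches at most $kn_p$ positions out of $N$, both effects are of relative order $kn_p/N$, so the probability of a spoiling placement is $O\!\big((kn_p/N)^2\big)$ and is negligible under $kn_p\ll N$; making this precise, e.g.\ by bounding the total-variation distance to the stated Poisson binomial, is the crux of the argument.

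Finally, once the Poisson binomial form is established, the stated moments follow from the standard sum formulas for a Poisson binomial. Using $\sum_{t=1}^{n_p}(t-1)=\frac{n_p(n_p-1)}{2}$ gives the mean $\sum_{t=1}^{n_p}\frac{k(t-1)}{N}=\tfrac{1}{2}(n_p-1)\frac{kn_p}{N}=\mu_{p,p}$, and additionally using $\sum_{t=1}^{n_p}(t-1)^2=\frac{(n_p-1)n_p(2n_p-1)}{6}$ gives the variance $\sum_{t=1}^{n_p}\frac{k(t-1)}{N}\!\left(1-\frac{k(t-1)}{N}\right)=\tfrac{1}{2}(n_p-1)\frac{kn_p}{N}\!\left[1-\frac{k(2n_p-1)}{3N}\right]=\sigma^2_{p,p}$, matching the claim.
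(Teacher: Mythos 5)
Your proposal is correct and follows essentially the same route as the paper's proof: sequentially revealing the $n_p$ marked vertices, treating the $t$-th placement as an (approximately independent) Bernoulli trial with success probability $\approx k(t-1)/N$, identifying $\tfrac{1}{2}|S_{p,p}|$ with the number of monochromatic edges, and computing the moments via the standard sums. Your explicit bounding of the spoiling events (multi-adjacency and neighborhood overlap) of order $O\big((kn_p/N)^2\big)$ is a slightly more careful articulation of what the paper simply imposes as its no-common-neighbors assumption, but the argument is the same.
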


\begin{proof}

Let black represents $c_p$. Consider a one-phase graph coloring process. $G$ is the uncolored graph. We randomly (with equal probability) color a vertex black for $n_p$ times, resulting in a black-colored graph $G_{b}$. The probability of having $l$ edges of same colored vertices in $G_{b}$ equals the probability of $|S_{p,p}| = 2l$ because each edge of the same colored vertices will be counted twice in $|S_{p,p}|$.

To obtain the general form of the asymptotic distribution of $|S_{p,p}|$, we need an assumption that none of the colored pairs share common neighbors except themselves, because the probability of two vertices sharing common neighbors depends on the specific structure of the uncolored graph $G$. Say coloring a neighbor of a black vertex to be a $success$, and otherwise a $failure$. This assumption guarantees that 1) each success creates and only creates two $pp$-pairs, 2) each success adds $k-2$ edges that have and only have one vertex colored black and each failure adds $k$ such edges, and 3) the probability of coloring a neighbor of a black vertex is independent of the previous coloring result. 

The first time of coloring will not result in any edges of the same colored vertices. For the second time, the probability of coloring a neighbor of a black vertex, i.e., success in creating two $pp$-pairs, is $\dfrac{k}{N-1} \approx \dfrac{k}{N}$. Then for the third time, the probability of coloring a neighbor of a black vertex becomes $\dfrac{k}{N-1}\dfrac{2k-2}{N-2} + \left(1 - \dfrac{k}{N-1}\right)\dfrac{2k}{N-2} = \dfrac{2k^2}{(N-1)(N-2)} - \dfrac{2k}{(N-1)(N-2)} + \dfrac{2k}{N-2} - \dfrac{2k^2}{(N-1)(N-2)} = \dfrac{2k(N-2)}{(N-1)(N-2)} = \dfrac{2k}{N-1} \approx \dfrac{2k}{N}$. It is easy to verify that for the $t$-th time, the probability of success, i.e., creating two $pp$-pairs, is approximately $\dfrac{(t-1)k}{N}$. 

We can view the coloring process as a series of $n_p$ independent binary trials with increasing probabilities of success. Then $\dfrac{1}{2}|S_{p,p}| = l$ equals the total number of success among all $n_p$ trials, which follows a Poisson binomial distribution $PB(\dfrac{k(t-1)}{N}; t = 1,\cdots, n_p)$, $\mu_{p,p} = \sum_{t=1}^{n_p} \dfrac{(t-1)k}{N} = \dfrac{1}{2}(n_p - 1)\dfrac{kn_p}{N}, \sigma^2_{p,p} = \sum_{t=1}^{n_p} \left[\dfrac{(t-1)k}{N}\left(1-\dfrac{(t-1)k}{N}\right)\right] = \sum_{t=1}^{n_p} \dfrac{(t-1)k}{N} - \sum_{t=1}^{n_p} \left(\dfrac{(t-1)k}{N}\right)^2 = \dfrac{1}{2}(n_p-1)\dfrac{kn_p}{N}\left[1 - \dfrac{k(2n_p-1)}{3N}\right]$ asymptotically as $\dfrac{n_p}{N} \rightarrow 0$. 

\end{proof}

\begin{lemma}[Probability of the Cardinality of Different-Value Sets]
\label{lemma:different-value-lemma}
If $n_q \ll kn_p \ll N$, then $|S_{p,q}| \sim B(n_q, \dfrac{kn_p}{N})$, with mean
$\mu_{p,q} = n_q\dfrac{kn_p}{N}$ and variance $\sigma^2_{p,q} = n_q \dfrac{kn_p}{N} \left(1 -\dfrac{kn_p}{N}\right)$.
\end{lemma}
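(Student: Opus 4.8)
The plan is to mirror the graph-coloring argument of Lemma~\ref{lemma:same-value-lemma}, but now with a \emph{two-phase} coloring that decouples the two values and, crucially, yields a \emph{constant} success probability rather than an increasing one, so the resulting law is a plain binomial rather than a Poisson binomial. First I would fix an arbitrary placement of the $n_p$ vertices carrying value $c_p$ (call them ``black'') and reinterpret $|S_{p,q}|$ as a count of black--$c_q$ adjacencies: each edge between a black vertex $v_i$ and a $c_q$-vertex $v_j$ contributes exactly the pair $(x_i,x_j)$ to $S_{p,q}$ (and the reverse pair lands in $S_{q,p}$, not $S_{p,q}$, since $c_q \neq c_p$), so enumerating $S_{p,q}$ is the same as counting edges from the black vertices to the $c_q$-colored vertices. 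I would then single out the \emph{boundary} of the black set, i.e.\ the collection of their neighbors. Each black vertex has $k$ neighbors, so there are at most $kn_p$ such neighbor slots; invoking the sparsity assumption $kn_p \ll N$ exactly as before, I would argue that asymptotically no two black vertices share a common neighbor, so these $kn_p$ slots are \emph{distinct} vertices, each belonging to a unique black vertex. Hence coloring a single ``red'' ($c_q$) vertex into a boundary slot creates exactly one $pq$-pair, while a red vertex off the boundary creates none.

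Next I would model the placement of the $n_q$ red vertices as $n_q$ sequential trials, declaring a trial a \emph{success} when the chosen vertex lands in a black-neighbor slot; by the previous step the number of successes is precisely $|S_{p,q}|$. The probability of success on any single trial is the ratio of boundary slots to available vertices, namely $\approx kn_p/N$. The crux of the argument, and the step I expect to be the main obstacle, is justifying that these $n_q$ trials are asymptotically \emph{independent and identically distributed}. This is where the second assumption $n_q \ll kn_p$ enters: it guarantees that the $n_q$ red placements deplete only a vanishing fraction of the $kn_p$ boundary slots and are far too sparse to collide on a shared slot, so the per-trial success probability stays $\approx kn_p/N$ throughout and successive outcomes are effectively independent. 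Establishing this rigorously (controlling the error as $n_q/(kn_p) \to 0$ and $kn_p/N \to 0$) is the delicate part, whereas everything else is bookkeeping.

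Under these two asymptotic regimes the total number of successes is exactly $|S_{p,q}|$, giving $|S_{p,q}| \sim B\!\left(n_q, \dfrac{kn_p}{N}\right)$. The claimed moments then follow with no further computation by substituting $n = n_q$ and $p = kn_p/N$ into the standard binomial formulas $\mu = np$ and $\sigma^2 = np(1-p)$, yielding $\mu_{p,q} = n_q \dfrac{kn_p}{N}$ and $\sigma^2_{p,q} = n_q \dfrac{kn_p}{N}\!\left(1 - \dfrac{kn_p}{N}\right)$.
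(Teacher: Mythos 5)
Your proposal is correct and follows essentially the same route as the paper: a two-phase coloring in which the black ($c_p$) vertices are placed first, the sparsity condition $kn_p \ll N$ is used to rule out shared neighbors so that there are $kn_p$ distinct boundary slots, and the $n_q$ placements of $c_q$ are treated as approximately i.i.d.\ trials with success probability $kn_p/N$ (justified by $n_q \ll kn_p$), giving $|S_{p,q}| \sim B(n_q, kn_p/N)$. As a minor aside, your moment computation actually fixes a typo in the paper's proof, which writes the variance factor as $\bigl(1 - \tfrac{kn_q}{N}\bigr)$ rather than the $\bigl(1 - \tfrac{kn_p}{N}\bigr)$ appearing in the lemma statement.
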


\begin{proof}
The proof uses similar techniques as in Lemma \ref{lemma:same-value-lemma}. Let black represents $c_p$ and white represents $c_q$. Consider a two-phase graph coloring process. $G$ is the uncolored graph. In the first phase, randomly (with equal probability) color $n_p$ vertices black, resulting in a black-colored graph $G_b$; in the second phase, randomly (with equal probability) color $n_q$ vertices white, resulting in a black-white-colored graph $G_{bw}$. The probability of having $l$ edges of differently colored vertices in $G_{bw}$ equals the probability of $|S_{p,q}| = l$.

Since $kn_p \ll N$, the probability that we get a $G_b$ in which two black vertices have common neighbors is very small. Thus the assumption we mentioned in the proof of Lemma \ref{lemma:same-value-lemma} can be considered valid, i.e., we can assume that in all possible outcomes of $G_b$, all black vertices do not share neighbors, which means in $G_b$ there are in total $kn_p$ edges that have and only have one vertex colored black. 

Then we randomly color $n_q$ vertices white based on $G_b$. Consider this process as repeatedly coloring one random vertex white for $n_q$ times. For the first time, the probability of coloring a neighbor of a black vertex, i.e., success in creating one $pq$-pair, is $\dfrac{kn_p}{N - n_p} \approx \dfrac{kn_p}{N}$. For the second time, the probability becomes $\dfrac{kn_p}{N - n_p} \dfrac{kn_p - 1}{N - n_p - 1} + \left(1 - \dfrac{kn_p}{N - n_p}\right)\dfrac{kn_p}{N - n_p - 1} \approx \dfrac{kn_p}{N} \dfrac{kn_p - 1}{N} + \left(1 - \dfrac{kn_p}{N}\right)\dfrac{kn_p}{N} \approx \dfrac{kn_p}{N}$. That is, when $kn_p \ll N$, the second step of coloring is approximately independent of the first step with the same probability of success. It can be easily verified that this approximate independence and equal probability holds for all $n_q$ steps as long as $n_q \ll kn_p$. The intuition is simple: if you draw a dozen out of thousands of black and white balls, the color of each draw is almost independent of other draws with equal possibilities. Subsequently, the second phase of coloring can be viewed approximately as $n_q$ i.i.d. binary trials with probability of success $\dfrac{kn_p}{N}$. Each successful trial adds a black-white edge to $G_{bw}$. 
We immediately know for any $G_b$, $|S_{p,q}| \sim B(n_q, \dfrac{kn_p}{N})$, $\mu_{p,q} = n_q\dfrac{kn_p}{N}, \sigma_{p,q}^2 = \dfrac{kn_p n_q}{N} (1 -\dfrac{kn_q}{N})$ asymptotically as $\dfrac{n_p}{N} \rightarrow 0$ and $\dfrac{n_q}{N} \rightarrow 0$ because the sum of i.i.d. binary random variables is a binomial random variable.

\end{proof}

\subsection{Normal Approximation of Binomial and Poisson Binomial Distributions} \label{sec:normal-approx}

Whereas we have demonstrated that $\Iupper$ is a weighted sum of binomial and Poisson binomial random variables, the analytical form of its distribution can not be simply found. Instead, if we approximate the binomial and Poisson binomial distributions with normal distributions, by the fact that normal distributions are stable distributions, the weighted sum can also be approximated by a normal distribution.




By the De Moivre-Laplace theorem~\cite{papoulis2002probability}, a binomial distribution with a relatively large probability of success (e.g., $> 0.1$) can be well approximated by a normal distribution with the same mean and variance. Therefore, we have: 
\begin{lemma}[Normal Approximation for Different-Value Sets]
\label{lemma:normal-approx-different-lemma}
If $n_q \ll kn_p \ll N$, then $|S_{p,q}| \sim N\Big(n_q\dfrac{kn_p}{N}, n_q \dfrac{kn_p}{N} \left(1 -\dfrac{kn_p}{N}\right)\Big)$ approximately.
\end{lemma}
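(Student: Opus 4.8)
The plan is to invoke the De Moivre--Laplace theorem directly, since Lemma~\ref{lemma:different-value-lemma} has already done the substantive combinatorial work of establishing that $|S_{p,q}| \sim B(n_q, \frac{kn_p}{N})$ with the stated mean and variance. The only remaining task is to justify that this particular binomial can be replaced by the normal distribution sharing its first two moments. First I would recall the precise statement of De Moivre--Laplace: for a binomial random variable $B(n, \theta)$, the standardized variable converges in distribution to the standard normal, and the approximation is accurate when the number of trials is large and the success probability is bounded away from $0$ and $1$ (so that $n\theta$ and $n(1-\theta)$ both grow). Here the role of $n$ is played by $n_q$ and the role of $\theta$ by $\frac{kn_p}{N}$.

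The key step is to check that the regime $n_q \ll kn_p \ll N$ supplies exactly the conditions De Moivre--Laplace requires. The success probability $\theta = \frac{kn_p}{N}$ is not vanishing under this assumption: since $kn_p \ll N$ is the upper bound but $kn_p$ is still taken to be a large quantity (comparable to $N$ up to the $\ll$ slack and, crucially, much larger than $n_q$), the probability stays in a range where, as the paper's own parenthetical remark notes, a success probability exceeding roughly $0.1$ yields a good normal fit. Thus I would argue that $\theta$ is of a moderate, non-negligible size while $n_q$ is large enough for the central limit behavior of a sum of $n_q$ i.i.d.\ Bernoulli trials to manifest. Matching the mean and variance then forces the approximating normal to be precisely $N\bigl(n_q\frac{kn_p}{N},\, n_q\frac{kn_p}{N}(1-\frac{kn_p}{N})\bigr)$, which is the claimed statement.

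Concretely, the proof is essentially a one-line application: by Lemma~\ref{lemma:different-value-lemma}, $|S_{p,q}| \sim B(n_q, \frac{kn_p}{N})$, and by the De Moivre--Laplace theorem a binomial with a non-vanishing success probability is approximated by a normal distribution with the same mean and variance, giving $|S_{p,q}| \sim N\bigl(n_q\frac{kn_p}{N},\, n_q\frac{kn_p}{N}(1-\frac{kn_p}{N})\bigr)$. Because the earlier lemma already carries the entire analytical burden, there is very little to prove here, and I expect the main (and only mild) obstacle to be rhetorical rather than mathematical: articulating why the success probability $\frac{kn_p}{N}$ should be regarded as ``relatively large'' and hence suitable for the normal approximation, given that the governing assumption $kn_p \ll N$ superficially suggests a small probability. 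The resolution is that $\ll$ here encodes an asymptotic separation of scales (with $kn_p$ still an appreciable fraction of $N$ in practice), not a probability driven to zero, and I would state this explicitly to preempt the apparent tension before closing the proof.
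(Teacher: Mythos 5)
Your proposal matches the paper's own treatment: the paper derives this lemma in exactly one step, by combining Lemma \ref{lemma:different-value-lemma} (which gives $|S_{p,q}| \sim B(n_q, \frac{kn_p}{N})$) with the De Moivre--Laplace theorem, taking the approximating normal to share the binomial's mean and variance. Your additional discussion of why the success probability $\frac{kn_p}{N}$ may be regarded as ``relatively large'' despite the assumption $kn_p \ll N$ simply makes explicit what the paper leaves as a parenthetical remark (``e.g., $> 0.1$'').
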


In addition, in \cite{approximate-poisson-binomial}, the authors demonstrate that a Poisson binomial distribution consisting of $n$ independent binary trials with mean $\mu$ and variance $\sigma^2$ can be approximated by a normal distribution $N(\mu - \dfrac{1}{2}, \sigma^2)$ when $n$ is sufficiently large. According to \cite{approximate-poisson-binomial}, we have:
\begin{lemma}[Normal Approximation for Same-Value Sets]
\label{lemma:normal-approx-same-lemma}
If $kn_p \ll N$ and $n_p$ is sufficiently large, then $|S_{p,p}| \sim N\Big((n_p - 1)\dfrac{kn_p}{N} - 1, 2(n_p-1)\dfrac{kn_p}{N}\left[1 - \dfrac{k(2n_p-1)}{3N}\right]\Big)$ approximately.
\end{lemma}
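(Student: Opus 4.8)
The plan is to derive the stated normal law directly from two ingredients already at hand: the Poisson binomial characterization of $\frac{1}{2}|S_{p,p}|$ in Lemma~\ref{lemma:same-value-lemma}, and the normal approximation for Poisson binomial distributions cited from \cite{approximate-poisson-binomial}. The strategy is to approximate $\frac{1}{2}|S_{p,p}|$ first and then recover $|S_{p,p}|$ by the elementary scaling $|S_{p,p}| = 2 \cdot \big(\frac{1}{2}|S_{p,p}|\big)$.

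First I would recall from Lemma~\ref{lemma:same-value-lemma} that, under $kn_p \ll N$, the random variable $\frac{1}{2}|S_{p,p}|$ follows $PB\big(\frac{k(t-1)}{N}; t=1,\dots,n_p\big)$, a sum of $n_p$ independent (non-identical) Bernoulli trials, with mean $\mu_{p,p} = \frac{1}{2}(n_p-1)\frac{kn_p}{N}$ and variance $\sigma^2_{p,p} = \frac{1}{2}(n_p-1)\frac{kn_p}{N}\big[1 - \frac{k(2n_p-1)}{3N}\big]$. Since the number of trials is exactly $n_p$, the hypothesis that $n_p$ is sufficiently large places us in the regime where the approximation of \cite{approximate-poisson-binomial} is valid.

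Next I would invoke \cite{approximate-poisson-binomial}: a Poisson binomial law over $n$ independent trials with mean $\mu$ and variance $\sigma^2$ is approximated, for $n$ large, by $N(\mu - \frac{1}{2}, \sigma^2)$. Applying this with $n=n_p$, $\mu=\mu_{p,p}$, and $\sigma^2=\sigma^2_{p,p}$ gives $\frac{1}{2}|S_{p,p}| \sim N(\mu_{p,p} - \frac{1}{2},\, \sigma^2_{p,p})$ approximately. I would then use stability of the normal family under affine maps: if $X \sim N(m,v)$ then $2X \sim N(2m, 4v)$. Writing $|S_{p,p}| = 2\cdot\big(\frac{1}{2}|S_{p,p}|\big)$ and substituting $\mu_{p,p}$ and $\sigma^2_{p,p}$, the mean becomes $2(\mu_{p,p}-\frac{1}{2}) = (n_p-1)\frac{kn_p}{N} - 1$ and the variance becomes $4\sigma^2_{p,p} = 2(n_p-1)\frac{kn_p}{N}\big[1 - \frac{k(2n_p-1)}{3N}\big]$, which is exactly the asserted distribution.

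The step demanding the most care --- rather than genuine difficulty --- is the bookkeeping of the scale factor $2$ across the two moments: the mean carries a factor $2$ (so the continuity-correction shift $-\frac{1}{2}$ doubles to $-1$), whereas the variance carries $2^2 = 4$. The only substantive point to check is that the hypotheses underlying \cite{approximate-poisson-binomial} actually hold here, i.e., that the success probabilities $\frac{k(t-1)}{N}$ behave well enough (not collapsing to $0$ or $1$ across the bulk of the index range) for the associated Lindeberg-type condition to be satisfied; this is ensured jointly by $kn_p \ll N$ and $n_p$ large, so no additional assumption is required.
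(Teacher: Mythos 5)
Your proposal is correct and follows exactly the route the paper takes (implicitly): combine the Poisson binomial characterization of $\frac{1}{2}|S_{p,p}|$ from Lemma~\ref{lemma:same-value-lemma} with the $N(\mu - \frac{1}{2}, \sigma^2)$ approximation of \cite{approximate-poisson-binomial}, then rescale by $2$ so the mean becomes $2\mu_{p,p}-1$ and the variance $4\sigma^2_{p,p}$, yielding the stated parameters. Your explicit bookkeeping of the continuity-correction shift doubling to $-1$ while the variance picks up a factor $4$ is precisely the calculation the paper leaves to the reader.
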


According to \cref{lemma:normal-approx-different-lemma} and \cref{lemma:normal-approx-same-lemma}, both $|S_{p,q}|$ and $|S_{p,p}|$ can be approximated as normal distributions. Additionally, we know that for a normal random variable $Y \sim N(\mu, \sigma^2)$ and any constants $a_1, a_2, a_1 \neq 0$, $a_1Y+a_2 \sim N(a_1\mu+a_2, a_1^2\sigma^2)$. Since \cref{lemma:rearrangement} shows that $\Iupper$ is a weighted sum of $|S_{p,q}|$ and $|S_{p,p}|$, we can derive that $\Iupper$ can be also approximated with a normal distribution.

\subsection{Analytical Approximation of the Distribution of $\Iupper$} \label{sec:analytical}

We know from the discussions above that $\Iupper$ approximately follows a normal distribution. The parameters we need to specify are its mean and variance. 

\begin{theorem}[Approximate Mean of $\Iupper$]
\label{theorem:approx-mean-theorem}
Given $\valuescheme$, the approximate mean of $\Iupper$ is 
\begin{ceqn}
\begin{align} \label{eq:mean-of-Iupper}
\Iuppermean = \sum_{p \neq q} (c_p - \xmean)(c_q - \xmean) \mu_{p,q} + \sum_{p} (c_p - \xmean)^2 \mu_{p,p}
\end{align}
\end{ceqn}
where $\mu_{p,q} = \min(n_p, n_q)\dfrac{k\max(n_p, n_q)}{N}$, $\mu_{p,p} = (n_p - 1)\dfrac{kn_p}{N} - 1$
\end{theorem}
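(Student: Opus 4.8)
The plan is to obtain $\mathbb{E}[\Iupper]$ by applying linearity of expectation to the decomposition in \cref{lemma:rearrangement}. That lemma expresses $\Iupper = \sum_{p,q}(c_p - \xmean)(c_q - \xmean)|S_{p,q}|$ as a fixed linear combination of the random cardinalities $|S_{p,q}|$, and once $\valuescheme$ is fixed the coefficients $(c_p - \xmean)(c_q - \xmean)$ are deterministic. Hence $\mathbb{E}[\Iupper] = \sum_{p,q}(c_p - \xmean)(c_q - \xmean)\,\mathbb{E}[|S_{p,q}|]$, and I would split this into the diagonal part ($p = q$) and the off-diagonal part ($p \neq q$), substituting the mean already derived for each family of sets.

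For the off-diagonal terms, \cref{lemma:different-value-lemma} gives $\mathbb{E}[|S_{p,q}|] = n_q \frac{kn_p}{N}$ under $n_q \ll kn_p \ll N$, and \cref{lemma:normal-approx-different-lemma} confirms the normal approximation leaves this mean unchanged. A point I would make explicit is that because the neighbor relation, and hence $w$, is symmetric we have $|S_{p,q}| = |S_{q,p}|$, so the two ordered terms $(p,q)$ and $(q,p)$ carry identical coefficients and identical means. Since the lemma's one-sided hypothesis $n_q \ll kn_p$ is safely met only when the scarcer value plays the phase-two (white) role, the mean is best written as $\mu_{p,q} = \min(n_p,n_q)\frac{k\max(n_p,n_q)}{N}$; because $\min(n_p,n_q)\max(n_p,n_q) = n_pn_q$, this equals the symmetric expression $\frac{kn_pn_q}{N}$ and is therefore consistent for both orderings.

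For the diagonal terms I would use the same-value analysis. \cref{lemma:same-value-lemma} yields $\mathbb{E}[\tfrac{1}{2}|S_{p,p}|] = \tfrac{1}{2}(n_p-1)\frac{kn_p}{N}$, so the raw Poisson-binomial mean is $(n_p-1)\frac{kn_p}{N}$. Because the theorem states the mean of the normal approximation that the whole argument targets, I would instead take the mean from \cref{lemma:normal-approx-same-lemma}, namely $\mu_{p,p} = (n_p-1)\frac{kn_p}{N} - 1$; the $-1$ is the doubled form of the $-\tfrac{1}{2}$ shift introduced when a Poisson binomial is replaced by a normal. Substituting $\mathbb{E}[|S_{p,p}|] = \mu_{p,p}$ into $(c_p-\xmean)^2\,\mathbb{E}[|S_{p,p}|]$ and summing over $p$ produces the second sum in \cref{eq:mean-of-Iupper}.

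The step I expect to be the main obstacle is the bookkeeping around the asymmetric hypothesis rather than any algebra: I must argue that replacing both ordered off-diagonal terms by the single value $\min(n_p,n_q)\frac{k\max(n_p,n_q)}{N}$ is legitimate, i.e., that the symmetric mean $\frac{kn_pn_q}{N}$ is genuinely what the one-sided lemma estimates and not an artifact of the color designation. The other point requiring care is deciding consistently, across both the diagonal and off-diagonal contributions, whether to match the means of the exact binomial/Poisson-binomial laws or of their normal surrogates; matching the surrogates throughout is what forces the $-1$ correction into $\mu_{p,p}$ while leaving $\mu_{p,q}$ untouched.
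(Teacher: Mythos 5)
Your proposal is correct and follows essentially the same route as the paper: linearity of expectation applied to the decomposition in \cref{lemma:rearrangement}, substituting the means of the normal surrogates from \cref{lemma:normal-approx-different-lemma} and \cref{lemma:normal-approx-same-lemma}, with the min/max form justified exactly as the paper does (to respect the one-sided hypothesis $n_q \ll kn_p$ of \cref{lemma:different-value-lemma}). Your elaborations --- the symmetry argument showing $\min(n_p,n_q)\max(n_p,n_q)=n_pn_q$ makes the off-diagonal mean well defined, and the $-1$ in $\mu_{p,p}$ being the doubled $-\tfrac{1}{2}$ continuity shift of the Poisson-binomial normal approximation --- are accurate details that the paper's two-sentence proof leaves implicit.
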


\begin{proof}
This is a simple consequence of the fact that for normal random variables, the mean of the sum equals the sum of the means. The minimum and the maximum functions are included to satisfy the condition that $n_q \ll kn_p$ in Lemma \ref{lemma:different-value-lemma}.
\end{proof}

It is more complicated to derive the variance. We know for independent normal random variables, the variance of the sum equals the sum of variances. However, this independence requirement does not hold for all $|S_{p,q}|$. When the spatial arrangement of $M-1$ values is known, the spatial arrangement of the remaining value is automatically known. That is, for any $1 \leq r \leq M$, $\sum_{p \neq r, q \neq r} |S_{p,q}|$ and $\sum_{p = r~\text{or}~q = r} |S_{p,q}|$ are correlated. Call $c_r$ the \textit{background value} and the other values \textit{foreground values}. The following theorem states that if there is a background value that a sufficiently large proportion of samples takes, we can analytically approximate the variance of $\Iupper$. 

\begin{theorem}[Approximate Variance of $\Iupper$]
\label{theorem:approx-variance-theorem}
Given $\valuescheme$, let $c_{r_{\max}}$ be the value that has the largest value size $n_{r_{\max}}$. If $\dfrac{n_{r_{\max}}}{N}$ is sufficiently large, the approximate variance of $\Iupper$ is 

\begin{equation} \label{eq:variance-of-Iupper}
\begin{split}
\Iuppervar^2 = & ~ \sum_{p \neq q \neq r_{\max}} \left[ (c_p - \xmean)(c_q - \xmean) - 2 (c_p - \xmean)(c_{r_{\max}} - \xmean) + (c_{r_{\max}} - \xmean)^2 \right]^2  \sigma^2_{p,q}\\
& ~ + \sum_{p \neq r_{\max}} \left[ (c_p - c_{r_{\max}})^2 \right]^2 \sigma^2_{p,p} \\
\end{split}
\end{equation}

where $$\sigma^2_{p,q} = \min(n_p, n_q) \dfrac{k\max(n_p, n_q)}{N} \left(1 -\dfrac{k\max(n_p, n_q)}{N}\right)$$ \\
$$\sigma^2_{p,p} = 2(n_p-1)\dfrac{kn_p}{N}\left[1 - \dfrac{k(2n_p-1)}{3N}\right]$$

\end{theorem}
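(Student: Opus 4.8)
The plan is to treat $\Iupper$ as a fixed affine functional of the collection $\{|S_{p,q}|\}$ and to push the normal approximations of \cref{lemma:normal-approx-different-lemma} and \cref{lemma:normal-approx-same-lemma} through that functional. By \cref{lemma:rearrangement}, $\Iupper = \sum_{p,q}(c_p-\xmean)(c_q-\xmean)|S_{p,q}|$ is a deterministic linear combination of the set cardinalities, each of which is (asymptotically) normal. Since any linear combination of jointly normal variables is again normal, with variance equal to the full quadratic form $\sum_{(p,q),(p',q')}(c_p-\xmean)(c_q-\xmean)(c_{p'}-\xmean)(c_{q'}-\xmean)\,\mathrm{Cov}(|S_{p,q}|,|S_{p',q'}|)$, the mean follows immediately (\cref{theorem:approx-mean-theorem}) but the variance forces me to control the covariances. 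The obstruction flagged before the statement is that the $|S_{p,q}|$ are not mutually independent: the regularity of $G$ imposes the degree identities $\sum_q|S_{p,q}|=kn_p$ together with the in-degree analogue $\sum_p|S_{p,q}|=kn_q$, so once the counts among the $M-1$ foreground values are fixed, every count touching the background value is determined.

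First I would use exactly these degree identities to eliminate the background $c_{r_{\max}}$. Solving the constraints writes each background-incident cardinality as an affine function of the foreground ones: $|S_{p,r_{\max}}|=kn_p-\sum_{q\neq r_{\max}}|S_{p,q}|$, symmetrically $|S_{r_{\max},q}|=kn_q-\sum_{p\neq r_{\max}}|S_{p,q}|$, and substituting both yields $|S_{r_{\max},r_{\max}}|=k(2n_{r_{\max}}-N)+\sum_{p,q\neq r_{\max}}|S_{p,q}|$. Substituting these three families into the rearrangement and collecting the coefficient of a foreground $|S_{p,q}|$, the four contributions $(c_p-\xmean)(c_q-\xmean)$, $-(c_p-\xmean)(c_{r_{\max}}-\xmean)$, $-(c_q-\xmean)(c_{r_{\max}}-\xmean)$ and $(c_{r_{\max}}-\xmean)^2$ telescope into the single product $(c_p-c_{r_{\max}})(c_q-c_{r_{\max}})$, which is the bracketed different-value coefficient of \eqref{eq:variance-of-Iupper} and, at $p=q$, its same-value factor $(c_p-c_{r_{\max}})^2$. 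The additive piece $k(2n_{r_{\max}}-N)(c_{r_{\max}}-\xmean)^2$ and the deterministic $\sum kn_p(\cdots)$ terms shift only the mean and are discarded.

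With the background removed, $\Iupper$ is, up to an additive constant, a linear combination of the foreground cardinalities alone, and I would argue that these are asymptotically mutually independent, so that the quadratic form collapses to its diagonal $\sum_{p,q\neq r_{\max}} a_{p,q}^2\,\sigma^2_{p,q}$ with $a_{p,q}=(c_p-c_{r_{\max}})(c_q-c_{r_{\max}})$. This is precisely where the hypothesis that $n_{r_{\max}}/N$ be large — and the choice of the most frequent value as background — enters: when one value dominates, the minority values are sprinkled sparsely and almost independently onto the background, so the foreground edge counts behave like the independent trials underlying \cref{lemma:same-value-lemma} and \cref{lemma:different-value-lemma} and their pairwise covariances become negligible against the diagonal terms. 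Substituting the per-set variances $\sigma^2_{p,q}$ and $\sigma^2_{p,p}$ from those lemmas into the diagonal sum then produces \eqref{eq:variance-of-Iupper}, with the $\min/\max$ bookkeeping carried over verbatim from \cref{theorem:approx-mean-theorem} to keep the hypothesis $n_q\ll kn_p$ satisfied.

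The main obstacle is this independence claim rather than the algebra: eliminating the background removes the one globally enforced correlation, but it is not obvious a priori that the surviving foreground counts carry no residual dependence. The crux is therefore to show that for a dominant background the cross-covariances $\mathrm{Cov}(|S_{p,q}|,|S_{p',q'}|)$ are of strictly lower order than the diagonal variances, which is exactly the asymptotic regime that ``$n_{r_{\max}}/N$ sufficiently large'' formalizes. A secondary, purely clerical hazard is keeping the directed-edge counting consistent so that the factor-of-two conventions separating same-value from different-value sets agree with those already fixed in \cref{lemma:same-value-lemma} and \cref{lemma:different-value-lemma}.
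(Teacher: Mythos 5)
Your route mirrors the paper's own proof: eliminate the background value through the degree identities, invoke approximate independence of the foreground counts when $n_{r_{\max}}/N$ is large, and collapse the variance of the resulting linear combination to its diagonal. Your elimination identities (including $|S_{r_{\max},r_{\max}}| = k(2n_{r_{\max}}-N) + \sum_{p,q\neq r_{\max}}|S_{p,q}|$) are correct, and your telescoping is correct algebra: direct substitution does give each ordered foreground pair the symmetric coefficient $b_{p,q} = (c_p-c_{r_{\max}})(c_q-c_{r_{\max}})$. The gap is your assertion that this product ``is the bracketed different-value coefficient of \eqref{eq:variance-of-Iupper}.'' It is not. The theorem's bracket is $a_{p,q} = (c_p-\xmean)(c_q-\xmean) - 2(c_p-\xmean)(c_{r_{\max}}-\xmean) + (c_{r_{\max}}-\xmean)^2$, which is asymmetric in $(p,q)$, and $a_{p,q} = b_{p,q} + (c_{r_{\max}}-\xmean)(c_q-c_p)$. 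The paper obtains $a_{p,q}$ by first folding $|S_{r_{\max},p}| = |S_{p,r_{\max}}|$ into a single family with a factor $2$ and then substituting \eqref{eq:spr}; your representation and the paper's agree as random variables only because the discrepancy $\sum_{p\neq q}(c_{r_{\max}}-\xmean)(c_q-c_p)|S_{p,q}|$ vanishes under the transpose symmetry $|S_{p,q}| = |S_{q,p}|$. But the variance step squares coefficients termwise, and the squares do not agree: summed over the two orders of a pair, $a_{p,q}^2 + a_{q,p}^2 = 2b_{p,q}^2 + 2(c_{r_{\max}}-\xmean)^2(c_p-c_q)^2$. Consequently your final expression falls short of Equation \eqref{eq:variance-of-Iupper} by $\sum_{p\neq q\neq r_{\max}} (c_{r_{\max}}-\xmean)^2(c_p-c_q)^2\,\sigma^2_{p,q}$, which is strictly positive whenever $c_{r_{\max}}\neq\xmean$ and at least two distinct foreground values exist. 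As written, you have derived a different (systematically smaller) formula, not the one in the theorem; to land on the stated result you must keep the asymmetric coefficient $a_{p,q}$ through the squaring step, as the paper does.

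The deeper issue, which your closing ``clerical hazard'' remark gestures at but does not resolve, is that the ordered counts are not distinct random variables: $|S_{p,q}| = |S_{q,p}|$ exactly, so declaring all ordered foreground pairs mutually independent (as both you and the paper do) makes the diagonal-collapse step representation-dependent --- two algebraically identical linear representations of $\Iupper$ yield different diagonal sums, which is exactly how your answer and the theorem's can both emerge from ``the same'' argument. A self-consistent version of your symmetric route would sum over unordered pairs with combined coefficient $2b_{p,q}$ and a single variance per pair, giving $4b_{p,q}^2$ per unordered pair --- yet a third value. Reproducing Theorem \ref{theorem:approx-variance-theorem} as stated therefore requires committing to the paper's specific bookkeeping (fold the background-incident sums via transpose symmetry first, then substitute, then square), and your proof needs to say so explicitly rather than treat the factor-of-two conventions as an afterthought.
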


\begin{proof}
Consider generating a sample by an $M-1$ phase graph coloring process, given $\valuescheme$. Choose $c_r$ as the background value. In each phase, we select an index $i$ from $1$ to $M$ except $r$ without replacement and fill $n_i$ vertices with the color of $c_i$. Regardless of how we select the values, after $M-1$ phases, the way to color the remaining $n_r$ vertices is now fixed --- i.e., the cardinality of the sets related to the last value is fully determined by the cardinality of other sets.

Formally, given any $1 \leq r \leq M$ and $p \neq r$, 

$$kn_p = \sum_{q} |S_{p,q}| = |S_{p,r}| + \sum_{q \neq r} |S_{p,q}|$$

because $\sum_{q} |S_{p,q}|$ is the total number of pairs that start with $c_p$, i.e., the edges in the directed graph $G$ that starts with $v_p$. Then 

\begin{ceqn}
\begin{align} \label{eq:spr}
    |S_{p,r}| = kn_p - \sum_{q \neq r} |S_{p,q}|
\end{align}
\end{ceqn}

By symmetry, $|S_{r, p}| = |S_{p,r}|$. Now consider $|S_{r,r}|$. Similarly, 
$$
\begin{aligned}
kN = & \sum_{p,q} |S_{p,q}| = |S_{r,r}| + \sum_{p \neq r ~\text{or}~ q \neq r} |S_{p,q}| \\
= & ~ |S_{r,r}| + \sum_{q \neq r} |S_{r,q}| + \sum_{p \neq r} |S_{p,r}| + \sum_{p \neq r, q \neq r} |S_{p,q}|
\end{aligned}
$$

Then

\begin{ceqn}
\begin{align} \label{eq:srr}
    |S_{r,r}| = kN - \left(\sum_{q \neq r} |S_{r,q}| + \sum_{p \neq r} |S_{p,r}| + \sum_{p \neq r,q \neq r} |S_{p,q}|\right)
\end{align}
\end{ceqn}

When $\dfrac{n_{r}}{N}$ is sufficiently large, the $M-1$ phases of graph coloring can be considered independent (similar intuition as in the proof of Lemma \ref{lemma:different-value-lemma}). Thus $|S_{p,q}|, p \neq r, q \neq r$, i.e. the cardinality of sets restricted to the foreground values, are approximately independent normal random variables. If we can represent $\Iupper$ as a weighted sum of the cardinality of these sets, the variance of $\Iupper$ can then be represented as a weighted sum of their variances.

\begin{lemma}[Rearrangement of $\Iupper$ as Cardinality of Sets restricted to Foreground Values]
\label{lemma:rearrangement-lemma-foreground}
$\Iupper$ is a weighted sum of the cardinality of all possible sets of $pq$-pairs restricted to the foreground values. Specifically, let $c_r$ be the background value, then

$$\Iupper = Q + \sum_{p \neq r, q \neq r} \left[ (c_p - \xmean)(c_q - \xmean) - 2 (c_p - \xmean)(c_r - \xmean) + (c_r - \xmean)^2 \right] |S_{p,q}|$$
\end{lemma}

where $Q := (c_r - \xmean)^2 kN + 2\sum_{p \neq r} \left[ (c_p - \xmean)(c_r - \xmean) - (c_r - \xmean)^2 \right] kn_p$ is a constant.

\begin{claimproof}

{\allowdisplaybreaks
\begin{ceqn}
\begin{align*}
\Iupper = & \sum_{p,q} (c_p - \xmean) (c_q - \xmean) |S_{p,q}| \\
= & ~ (c_r - \xmean)^2|S_{r,r}| + \sum_{q \neq r} (c_r - \xmean)(c_q - \xmean) |S_{r,q}| + \sum_{p \neq r} (c_p - \xmean)(c_r - \xmean) |S_{p,r}| \\
& ~ + \sum_{p \neq r, q \neq r} (c_p - \xmean)(c_q - \xmean) |S_{p,q}| \\
\end{align*}
\end{ceqn}
}

Use Equation \ref{eq:srr},

{\allowdisplaybreaks
\begin{ceqn}
\begin{align*}
\Iupper = & ~ (c_r - \xmean)^2 \left[ kN - \left(\sum_{q \neq r} |S_{r,q}| + \sum_{p \neq r} |S_{p,r}| + \sum_{p \neq r,q \neq r} |S_{p,q}| \right) \right] \\
& ~ + \sum_{q \neq r} (c_r - \xmean)(c_q - \xmean) |S_{r,q}| + \sum_{p \neq r} (c_p - \xmean)(c_r - \xmean) |S_{p,r}| \\
& ~ + \sum_{p \neq r, q \neq r} (c_p - \xmean)(c_q - \xmean) |S_{p,q}| \\
\end{align*}
\end{ceqn}
}

Use symmetry $|S_{r, p}| = |S_{p,r}|$,

{\allowdisplaybreaks
\begin{ceqn}
\begin{align*}
\Iupper = & ~ (c_r - \xmean)^2 kN - (c_r - \xmean)^2 \left( 2\sum_{p \neq r} |S_{p,r}| + \sum_{p \neq r,q \neq r} |S_{p,q}| \right) \\
& ~ + 2\sum_{p \neq r} (c_p - \xmean)(c_r - \xmean) |S_{p,r}| + \sum_{p \neq r, q \neq r} (c_p - \xmean)(c_q - \xmean) |S_{p,q}| \\
= & ~ (c_r - \xmean)^2 kN + 2\sum_{p \neq r} \left[ (c_p - \xmean)(c_r - \xmean) - (c_r - \xmean)^2 \right] |S_{p,r}| \\
& ~ + \sum_{p \neq r, q \neq r} \left[ (c_p - \xmean)(c_q - \xmean) - (c_r - \xmean)^2 \right] |S_{p,q}| \\
\end{align*}
\end{ceqn}
}

Use Equation \ref{eq:spr},

{\allowdisplaybreaks
\begin{ceqn}
\begin{align*}
\Iupper = & ~ (c_r - \xmean)^2 kN + 2\sum_{p \neq r} \left[ (c_p - \xmean)(c_r - \xmean) - (c_r - \xmean)^2 \right] \left(kn_p - \sum_{q \neq r} |S_{p,q}|\right) \\
& ~ + \sum_{p \neq r, q \neq r} \left[ (c_p - \xmean)(c_q - \xmean) - (c_r - \xmean)^2 \right] |S_{p,q}| \\
\end{align*}
\end{ceqn}
}

Finally, merge like terms with respect to $|S_{p,q}|$,

{\allowdisplaybreaks
\begin{ceqn}
\begin{align*}
\Iupper = & ~ (c_r - \xmean)^2 kN + 2\sum_{p \neq r} \left[ (c_p - \xmean)(c_r - \xmean) - (c_r - \xmean)^2 \right] kn_p \\
& ~ - 2\sum_{p \neq r} \left[ (c_p - \xmean)(c_r - \xmean) - (c_r - \xmean)^2 \right]\sum_{q \neq r} |S_{p,q}| \\
& ~ + \sum_{p \neq r, q \neq r} \left[ (c_p - \xmean)(c_q - \xmean) - (c_r - \xmean)^2 \right] |S_{p,q}| \\
= & ~ (c_r - \xmean)^2 kN + 2\sum_{p \neq r} \left[ (c_p - \xmean)(c_r - \xmean) - (c_r - \xmean)^2 \right] kn_p \\
& ~ - 2\sum_{p \neq r, q \neq r} \left[ (c_p - \xmean)(c_r - \xmean) - (c_r - \xmean)^2 \right] |S_{p,q}| \\
& ~ + \sum_{p \neq r, q \neq r} \left[ (c_p - \xmean)(c_q - \xmean) - (c_r - \xmean)^2 \right] |S_{p,q}| \\
= & ~ (c_r - \xmean)^2 kN + 2\sum_{p \neq r} \left[ (c_p - \xmean)(c_r - \xmean) - (c_r - \xmean)^2 \right] kn_p \\
& ~ + \sum_{p \neq r, q \neq r} \left[ (c_p - \xmean)(c_q - \xmean) - 2 (c_p - \xmean)(c_r - \xmean) + (c_r - \xmean)^2 \right] |S_{p,q}| \\
= & ~ Q + \sum_{p \neq r, q \neq r} \left[ (c_p - \xmean)(c_q - \xmean) - 2 (c_p - \xmean)(c_r - \xmean) + (c_r - \xmean)^2 \right] |S_{p,q}|
\end{align*}
\end{ceqn}
}

\end{claimproof}

As $Q$ remains fixed given $T_M$ and $|S_{p,q}|, p \neq r, q \neq r$ are approximately independent, the following relation holds by linearity:

\begin{align*}
\text{Var}(\Iupper) = & ~ \text{Var} (Q + \sum_{p \neq r, q \neq r} \left[ (c_p - \xmean)(c_q - \xmean) - 2 (c_p - \xmean)(c_r - \xmean) + (c_r - \xmean)^2 \right] |S_{p,q}|) \\
= & ~ \text{Var} (\sum_{p \neq r, q \neq r} \left[ (c_p - \xmean)(c_q - \xmean) - 2 (c_p - \xmean)(c_r - \xmean) + (c_r - \xmean)^2 \right] |S_{p,q}|) \\
\approx & ~ \sum_{p \neq r, q \neq r} \text{Var} (\left[ (c_p - \xmean)(c_q - \xmean) - 2 (c_p - \xmean)(c_r - \xmean) + (c_r - \xmean)^2 \right] |S_{p,q}|)\\
= & ~ \sum_{p \neq r, q \neq r} \left[ (c_p - \xmean)(c_q - \xmean) - 2 (c_p - \xmean)(c_r - \xmean) + (c_r - \xmean)^2 \right]^2 \text{Var}(|S_{p,q}|) \\
= & ~ \sum_{p \neq q \neq r} \left[ (c_p - \xmean)(c_q - \xmean) - 2 (c_p - \xmean)(c_r - \xmean) + (c_r - \xmean)^2 \right]^2 \text{Var}(|S_{p,q}|) \\
& ~ + \sum_{p \neq r} \left[ (c_p - \xmean)^2 - 2 (c_p - \xmean)(c_r - \xmean) + (c_r - \xmean)^2 \right]^2 \text{Var}(|S_{p,p}|) \\
= & ~ \sum_{p \neq q \neq r} \left[ (c_p - \xmean)(c_q - \xmean) - 2 (c_p - \xmean)(c_r - \xmean) + (c_r - \xmean)^2 \right]^2 \text{Var}(|S_{p,q}|) \\
& ~ + \sum_{p \neq r} \left[ (c_p - c_r)^2 \right]^2 \text{Var}(|S_{p,p}|) \\
\end{align*}

Insert the formulae of $\text{Var}(|S_{p,q}|)$ and $\text{Var}(|S_{p,p}|)$ from Lemma \ref{lemma:normal-approx-different-lemma} and Lemma \ref{lemma:normal-approx-same-lemma} and Theorem \ref{theorem:approx-variance-theorem} is proved. To best satisfy the condition of approximate independence, the background value should have the largest value size.
\end{proof}


\section{Analysis of Approximation Accuracy and Robustness on Synthetic Data} \label{sec:accuracy-robustness}

For the sake of mathematical preciseness, we have introduced many assumptions and conditions during the problem setup and the proof. It is necessary to systematically investigate how violations of these assumptions and conditions may affect the accuracy of approximation. In the following, by using a series of experiments on synthetic data, we demonstrate that most of them can be relaxed while our approximation remains sufficiently accurate. This implies that the spatial self-information $J$ is numerically robust for practical use. 

As a quick recap, the assumptions and conditions used in our theoretical derivations are listed below in the order they are introduced:

\begin{itemize}
    \item \textbf{Assumption 1}: the spatial weights are binary (0-1). See Section \ref{sec:problem-setup}.
    \item \textbf{Assumption 2}: the observations take a small number of discrete values. See Section \ref{sec:problem-setup}.
    \item \textbf{Assumption 3}: the observations have equal numbers of neighbors. See Section \ref{sec:problem-setup}.
    \item \textbf{Assumption 4}: no pairs of same-colored vertices share common neighbors other than themselves. See Section \ref{sec:asymptotic-binary}.
    \item \textbf{Condition 1}: for deriving the asymptotic distributions of $|S_{p,q}|$ and $|S_{p,p}|$, it is required that $n_q \ll kn_p \ll N$. See Section \ref{sec:asymptotic-binary}.
    \item \textbf{Condition 2}: $n_p$ is required to be sufficiently large to have good normal approximations of the asymptotic binomial and Poisson binomial distributions. See Section \ref{sec:normal-approx}.
    \item \textbf{Condition 3}: $n_{r_{\max}} / N$ is required to be sufficiently large to ensure approximate independence of normal random variables. See Section \ref{sec:analytical}.
\end{itemize}

Among all these terms, only Assumption 1 is mandatory. We will analyze how and to what extent the other assumptions and conditions can be relaxed in the following subsections, starting from the most critical ones.

The statistics used to measure the accuracy of approximation are: the analytical mean $\Iuppermean$, the analytical standard deviation $\Iuppervar$, the sample mean $\samplemean$, the sample standard deviation $\samplevar$, the standardized difference of mean $|\Iuppermean - \samplemean| / \Iuppervar$, the standardized difference of standard deviation $|\Iuppervar - \samplevar| / \samplevar$, and the KL-divergence from the analytical normal distribution to the empirical distribution. For all experiments, we randomly sample 10,000 $40\times40$ grids for $10$ times and report both the means and the standard deviations of the statistics.

\subsection{Relaxation of Condition 3: Violation of Approximate Independence}

The approximation accuracy is mostly dependent on the satisfaction of the independence condition in Theorem \ref{theorem:approx-variance-theorem}. The level of independence is measured by $b = n_{r_{\max}} / N$, the proportion of background. The higher $b$, the higher independence. 

\begin{figure*}[t!]
\centering
\captionsetup[subfloat]{justification=centering}
\subfloat[high independence, $b = 65\%$. \\ $\Iuppermean = -5.21$, $\Iuppervar = 135.11$ \\ $\samplemean = -4.51$, $\samplevar = 132.84$]{\includegraphics[width = 0.30\textwidth]{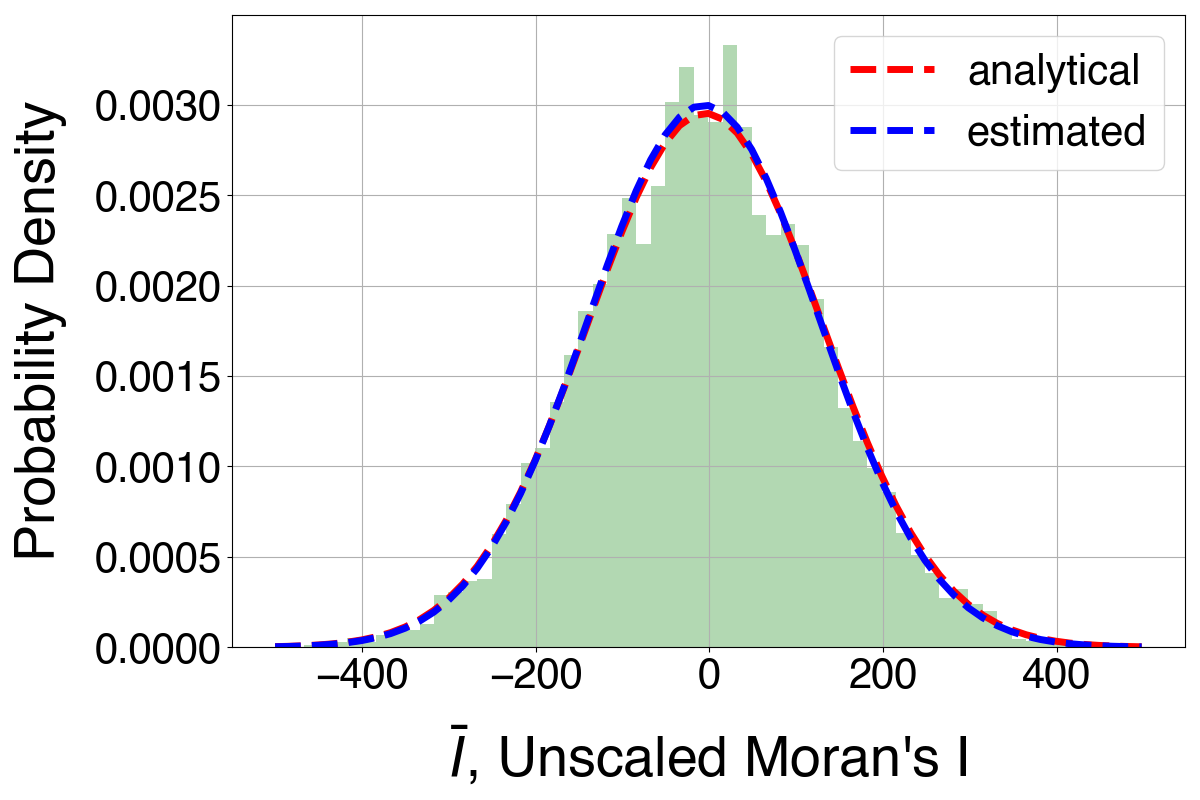}\label{fig:high-independence}}\quad
\subfloat[mid independence, $b = 45\%$. \\ $\Iuppermean = -4.45$, $\Iuppervar = 168.46$ \\ $\samplemean = -5.56$, $\samplevar = 153.99$]{\includegraphics[width = 0.30\textwidth]{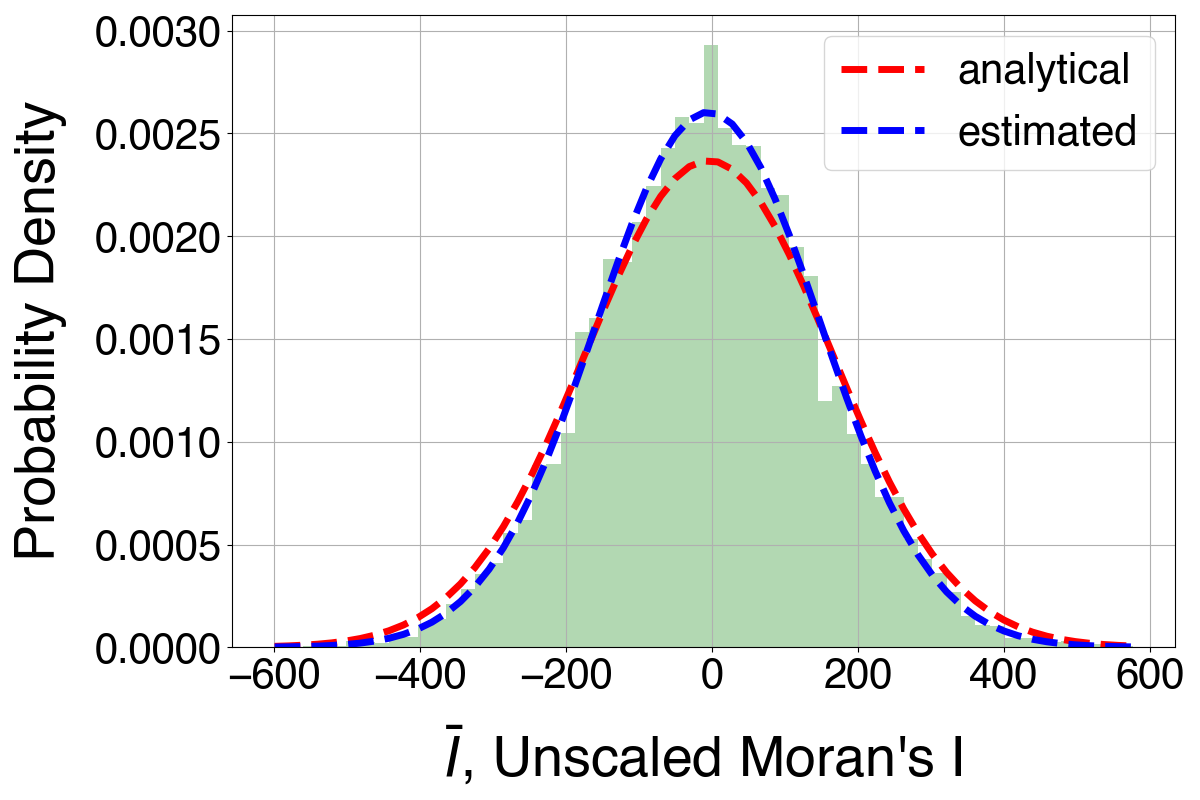}\label{fig:medium-independence}}\quad
\subfloat[low independence, $b = 25\%$. \\ $\Iuppermean = -4.39$, $\Iuppervar = 161.66$ \\ $\samplemean = -3.94$, $\samplevar = 139.85$]{\includegraphics[width = 0.30\textwidth]{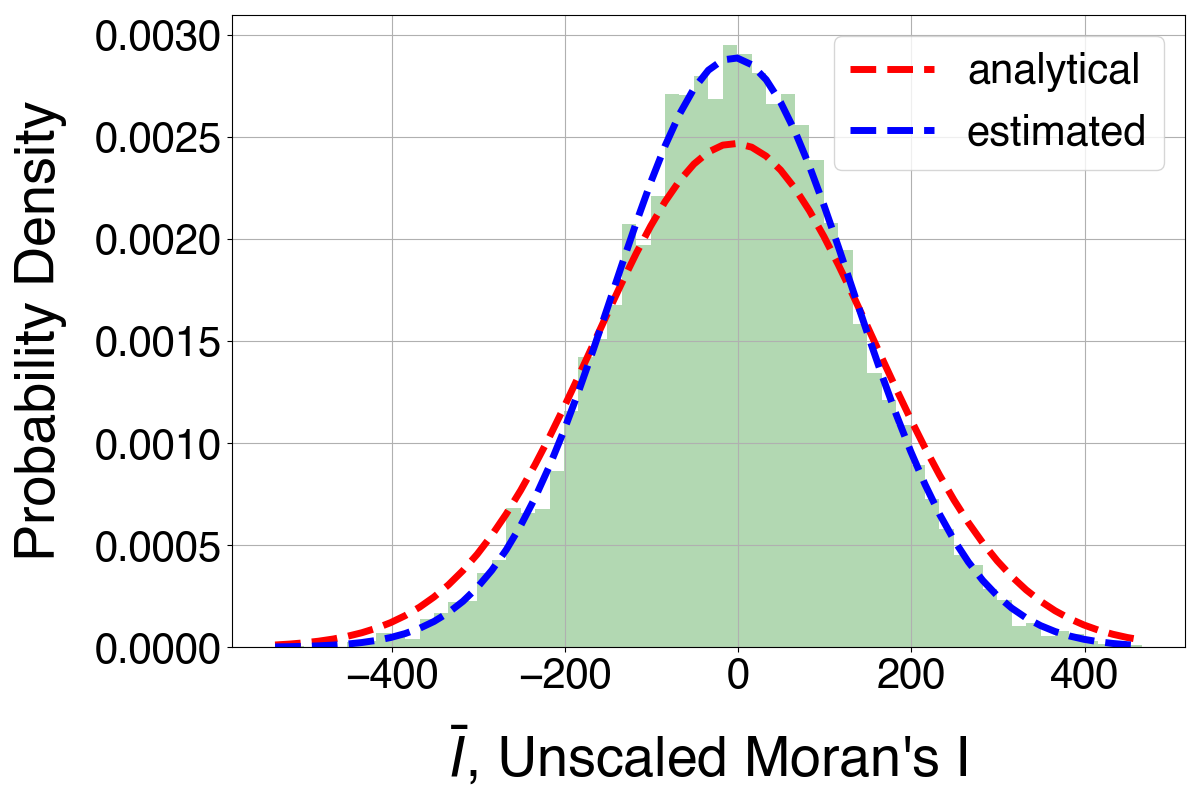}\label{fig:low-independence}}
\caption{Histograms of $\Iupper$ of 10,000 randomly generated $40\times40$ grids using rook's distance. From (a) to (c) the proportion of background $b$ decreases, i.e., the level of independence decreases. The blue lines represent the estimated normal distributions from the histograms of $\Iupper$. The red lines represent the analytical approximations based on \cref{theorem:approx-mean-theorem} and \ref{theorem:approx-variance-theorem}.} 
\label{fig:approximation-visualization}
\end{figure*}

Figure \ref{fig:approximation-visualization} demonstrates that when the independence condition is satisfied, the analytical normal approximation perfectly fits the empirical distribution. More specifically, in the high independence case, the analytical approximation passes the Kolmogorov-Smironov Test with a $p$-value of $0.20$, which is significantly larger than $0.05$. That means our approximation is statistically indistinguishable from the actual distribution. As the proportion of background decreases, the analytical standard deviation becomes increasingly overestimated, due to the violation of the independence condition. However, Figure \ref{fig:approximation-visualization} and Figure \ref{fig:independence-relaxation} also show, though the approximate variance becomes inaccurate, the approximate mean remains extremely accurate: the absolute difference between the analytical mean and the empirical mean is consistently below $2\%$ of the empirical standard deviation regardless of the level of independence. This is very useful for designing loss functions: given the mean of a normal distribution, even though we do not have its exact variance, the \textbf{relative} difference between any two values is known, i.e., we can still accurately compute the direction of the gradients.

\begin{figure*}[t!]
\centering
\captionsetup[subfloat]{justification=centering}
\subfloat[standardized difference of the mean of $\Iupper$]{\includegraphics[width = 0.31\textwidth]{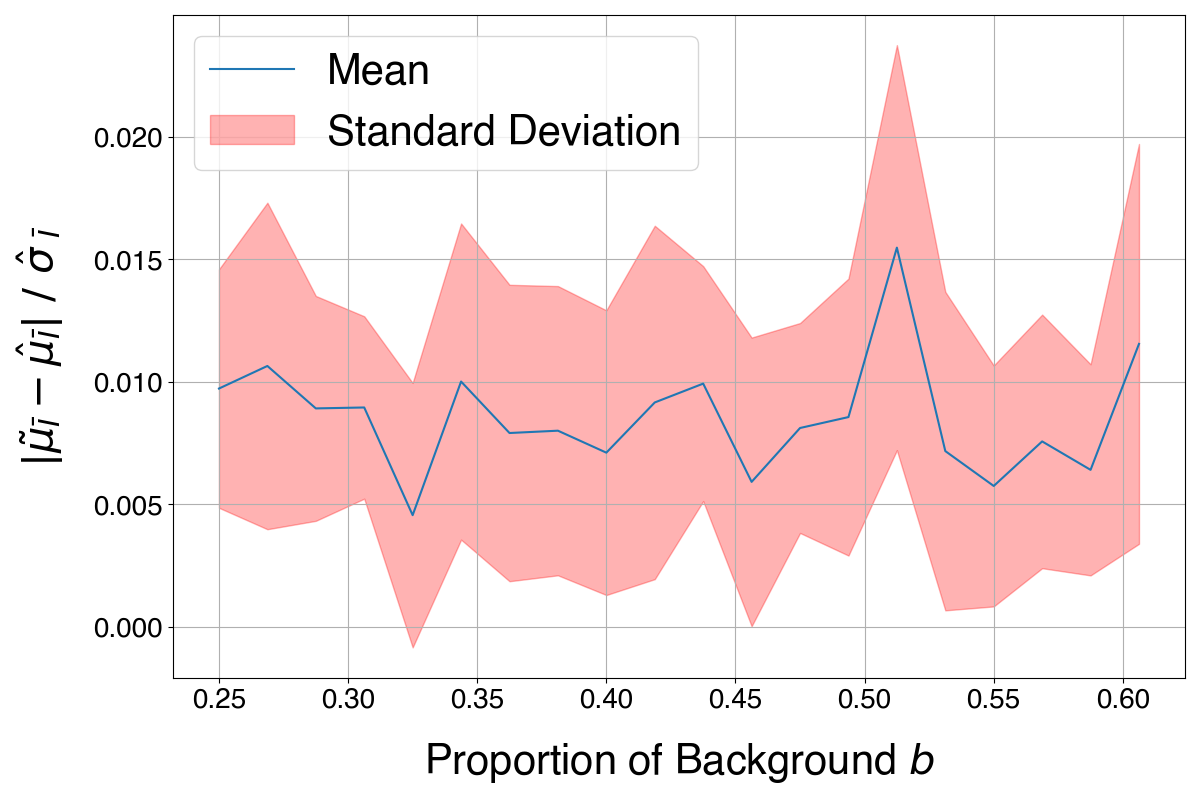}\label{fig:standardized-mean.png}}\quad
\subfloat[standardized difference of the standard deviation of $\Iupper$]{\includegraphics[width = 0.31\textwidth]{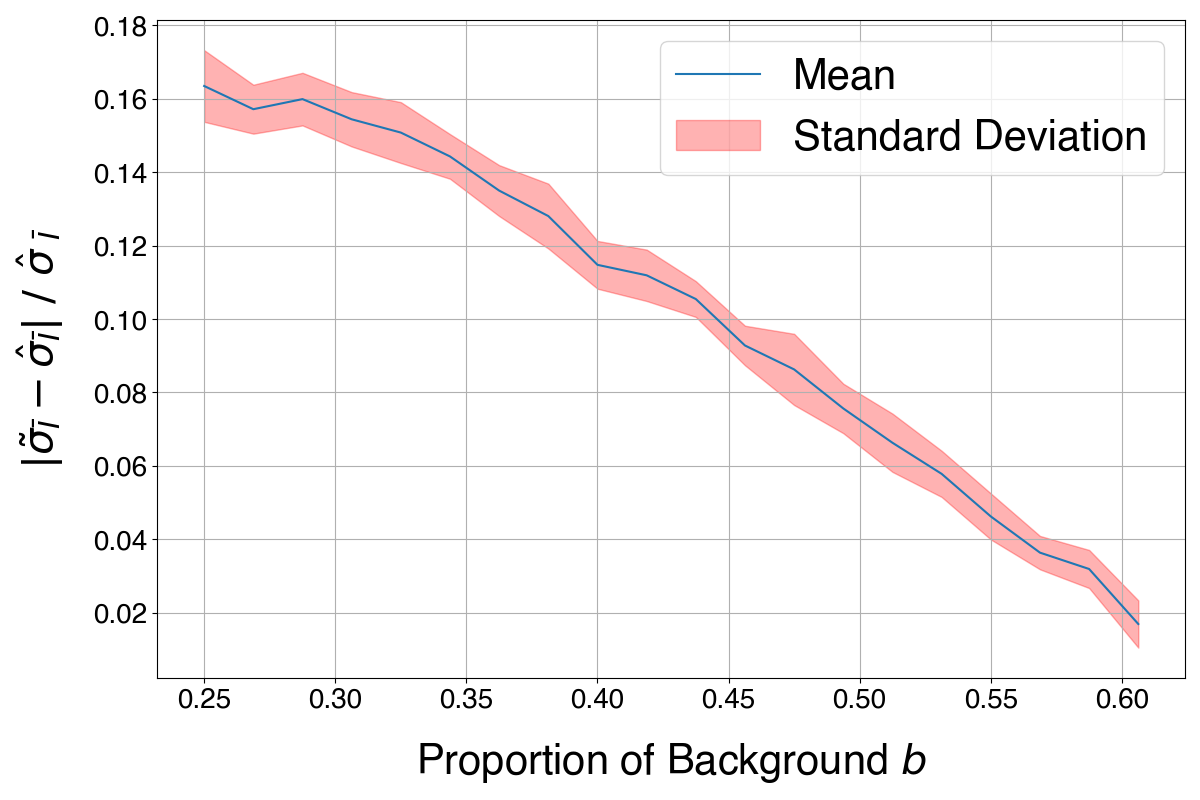}\label{fig:standardized-variance.png}}\quad
\subfloat[KL-divergence to the empirical distribution]{\includegraphics[width = 0.31\textwidth]{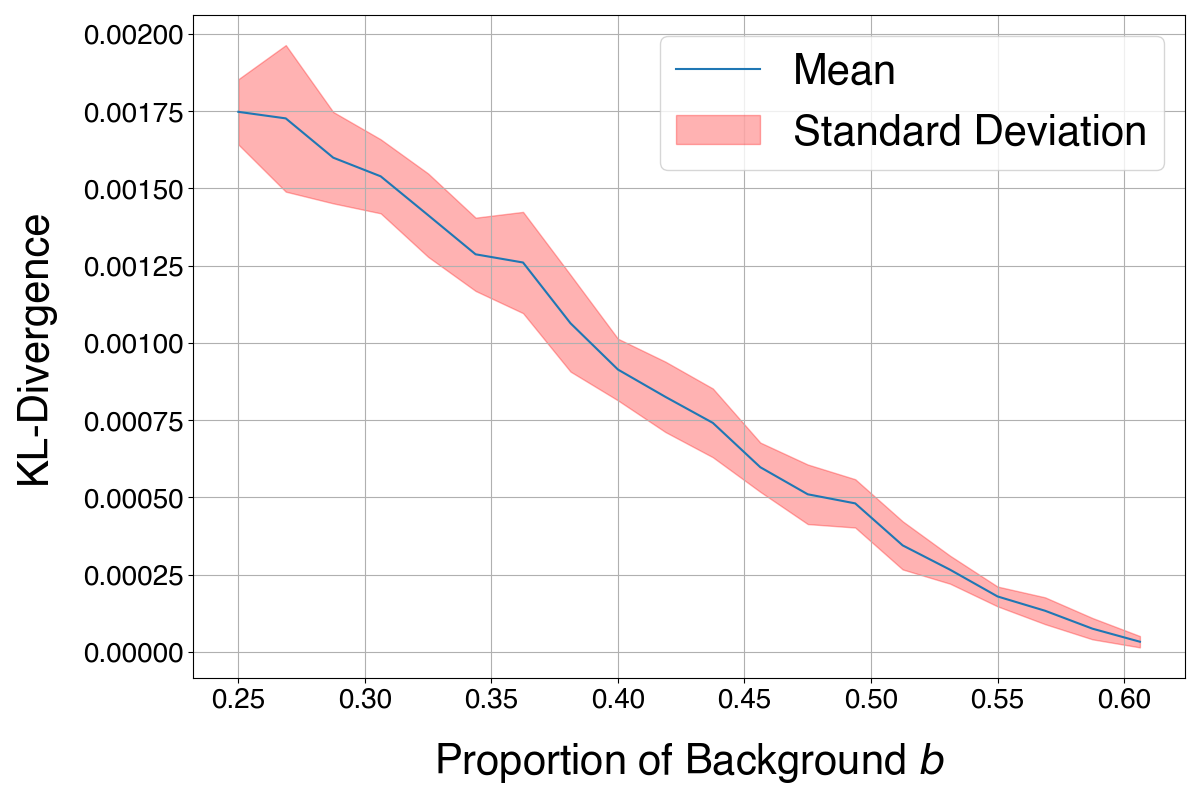}\label{fig:kl-divergence}}
\caption{The relation between the approximation accuracy and the level of independence (measured by $b= n_{r_{\max}} / N$, the proportion of background values). At each level of independence, we repeatedly sample 10,000 $40\times40$ grids randomly for 10 times. (a) and (b) plot the standardized difference between the analytical mean/standard deviation and the empirical mean/standard deviation. (c) plots the KL divergence from the analytical approximation to the empirical distribution.}
\label{fig:independence-relaxation}
\end{figure*}

\subsection{Relaxation of Assumption 3: Different Numbers of Neighbors} \label{sec:relaxation-assumption-2}

It is common that not all observations have the same number of neighbors. For example, for grid data with rook's weights, the border observations only have $3$ neighbors and corner observations only $2$, instead of $4$. There are two cases of violation: (1) the total number of neighbors remains $kN$, but not equally distributed, or (2) the total number of neighbors is larger or smaller than $kN$. 

The first case can be seen as a perturbation of the weight matrix. To investigate this, given a rate of perturbation $\rho$, we randomly select $\rho kN$ $1$s and $\rho kN$ $0$s in the weight matrix and flip their values (0 to 1 and vice versa). This is equivalent to randomly deleting and adding equal amount of edges in the graph. After such perturbation, while the total number of neighbors remains the same, the equal-number-of-neighbor assumption is violated. We generate a random sample of grids, perturb the weight matrix at increasing rates of perturbation, derive the analytical approximation according to the perturbed weight matrices, and compute the KL-divergence respectively. In Figure \ref{fig:kl-divergence-perturb-k} we plot how the approximation accuracy changes as the rate of perturbation increases. When the rate of perturbation is under 0.15, the KL-divergence remains under $2.5\times10^{-5}$, which is comparable to the high-independence KL-divergence values in Figure \ref{fig:kl-divergence}. It indicates that a mild violation of the equal-number-of-neighbor assumption shall not introduce significant approximation error.

\begin{figure*}[t!]
\centering
\captionsetup[subfloat]{justification=centering}
\subfloat[perturbation of weight matrix]{\includegraphics[width = 0.48\textwidth]{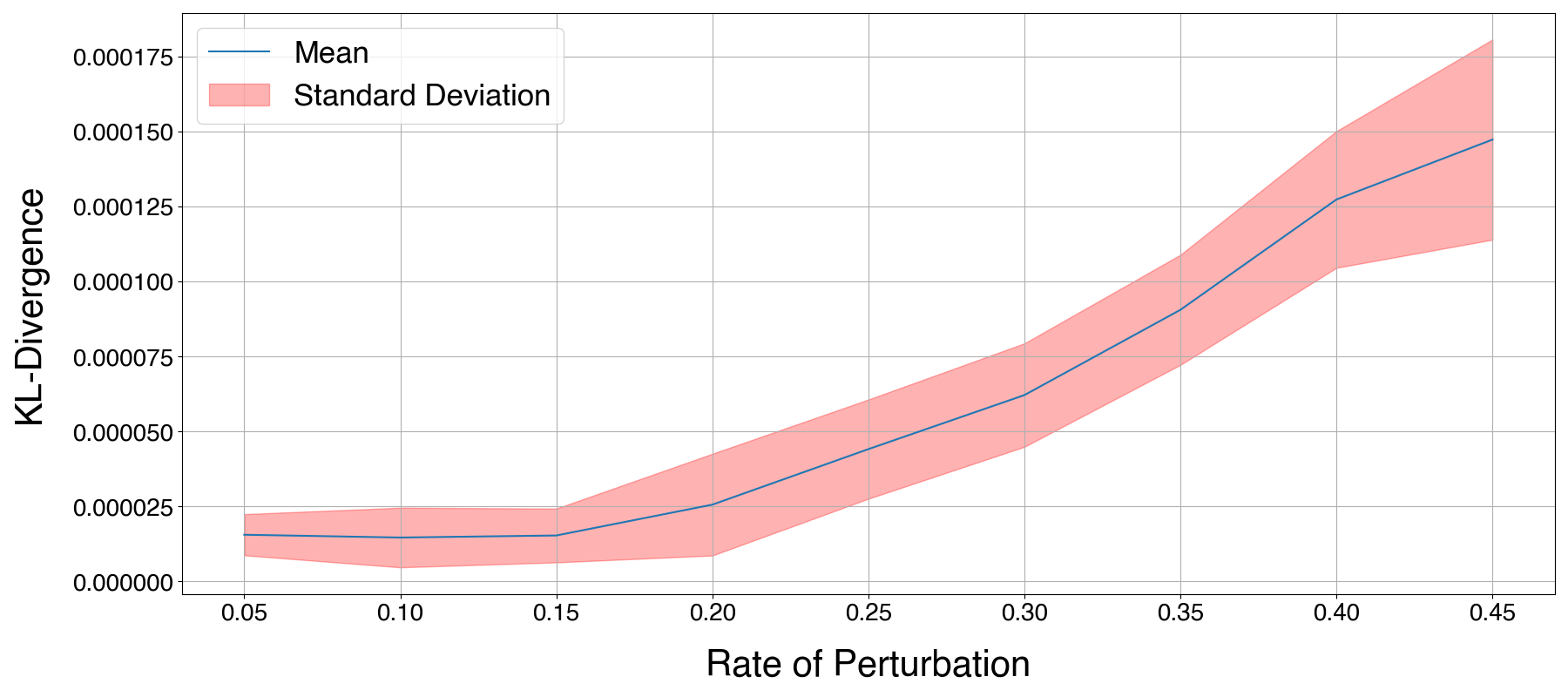}\label{fig:kl-divergence-perturb-k}}\quad
\subfloat[change of total number of neighbors]{\includegraphics[width = 0.48\textwidth]{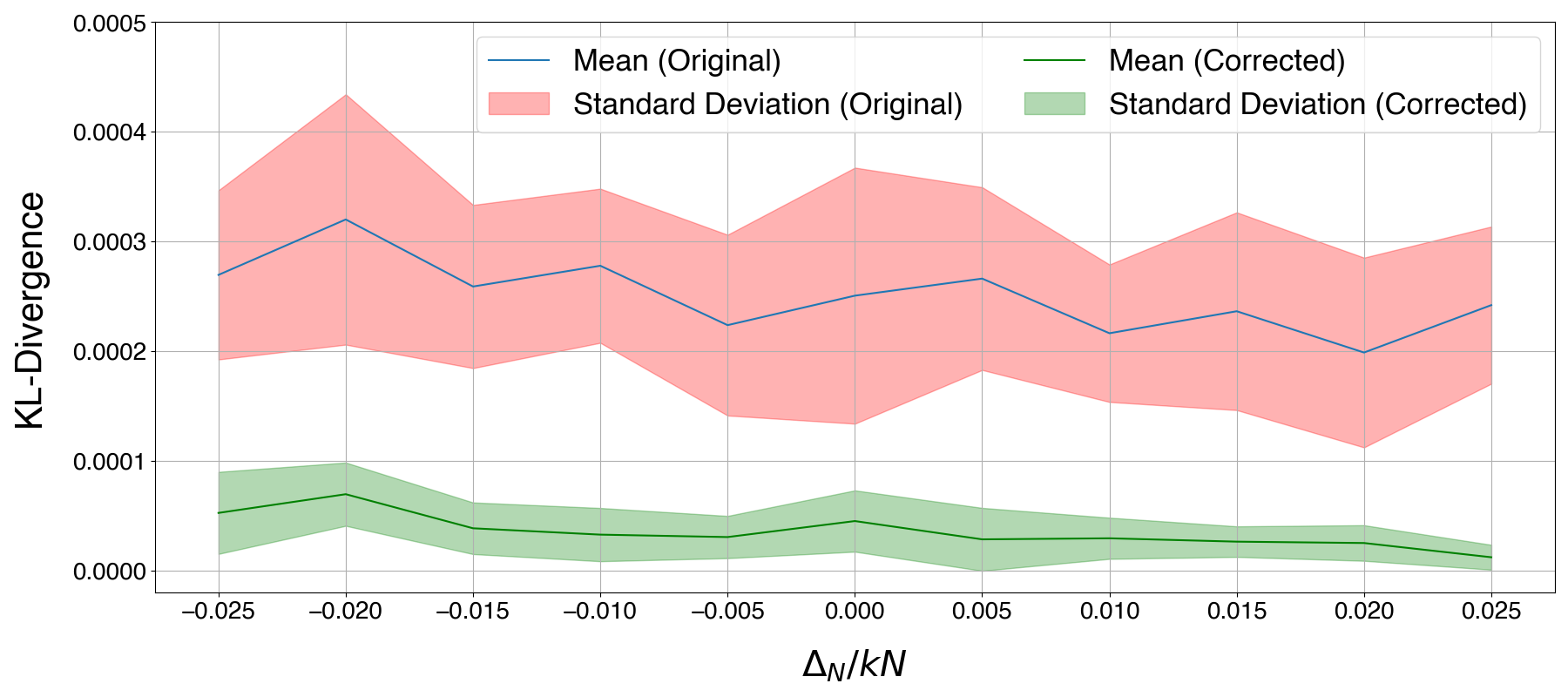}\label{fig:kl-divergence-systematic-k}}
\caption{(a) The relation between the approximation accuracy and the level of perturbation in the number of neighbors, measured by the rate of perturbation. (b) The relation between the approximation accuracy and the level of change in total number of neighbors, measured by the change rate $\Delta_N/kN$.}
\label{fig:number-of-neighbors-relaxation}
\end{figure*}

In the second case, the violation leads to systematic overestimation or underestimation of $|S_{p,q}|$. Let the difference between the total number of neighbors and $kN$ be $\Delta_N$, each $|S_{p,q}|$ can be corrected by multiplying the scaling factor $1 - \Delta_N / kN$. To verify this, we randomly select $\Delta_N$ $1$s/$0$s from the weight matrix and flip their values, which increases/decreases the total number of neighbors by $\Delta_N$. In Figure \ref{fig:kl-divergence-systematic-k} we plot how the approximation accuracy changes with $\Delta_N$ when applying and not applying the scaling factor. We can see the corrected approximation has consistently low KL-divergence ($< 1\times10^{-4})$. It means that we can safely use the approximation with spatial weights that have edge cases (like the borders and corners in the rook/queen weights).

\subsection{Relaxation of Assumption 4: Common Neighbors}

The assumption in Lemma \ref{lemma:normal-approx-same-lemma} that none of the colored vertices share common neighbors is used to derive the probability of success. When this assumption holds, each success creates exactly $2$ same-value pairs and $k-2$ candidate vertices which we can color to obtain another $2$ same-value pairs. Intuitively, it assumes that colored vertices do not form large clusters. When $n_p$ is small, the assumption is valid because the colored vertices are scattered. However, when $n_p$ is large, $|S_{p,p}|$ will be underestimated because coloring a common neighbor creates more than $2$ same-value pairs.


We find that multiplying each $|S_{p,p}|$ with a scaling factor $(|S_{p,p}| - 1) / |S_{p,p}|$ corrects the underestimation. The intuition is that, as $|S_{p,p}|$ gets large, the violation of having common neighbors has a decreasing effect on the approximate probability of success. To verify this, we generate $40\times40$ grids with rook's weights that have one background value and three foreground values. Set the value sizes of all the foreground values to be the same number $n$ from $200$ to $300$ by a step of $20$, and investigate how the analytical $|S_{p,p}|$ differs from the empirical $|\hat{S}_{p,p}|$ as $n$ increases. Figure \ref{fig:difference-beta2}  demonstrates that the corrected analytical $|S_{p,p}|$ is an extremely accurate approximation of the empirical $|\hat{S}_{p,p}|$, and \cref{fig:kl-divergence-beta2} demonstrates that such improvement results in better overall KL-divergence from the approximation to the empirical distribution. The up-going trend in Figure \ref{fig:kl-divergence-beta2} is irrelevant to the assumption of no common neighbors. Instead, it is a result of the violation of the independence condition as increasing $n$ means a smaller proportion of background, which makes the approximation of standard deviations less accurate.

\begin{figure*}[t!]
\centering
\captionsetup[subfloat]{justification=centering}
\subfloat[difference from the empirical $|\hat{S}_{p,p}|$]{\includegraphics[width = 0.48\textwidth]{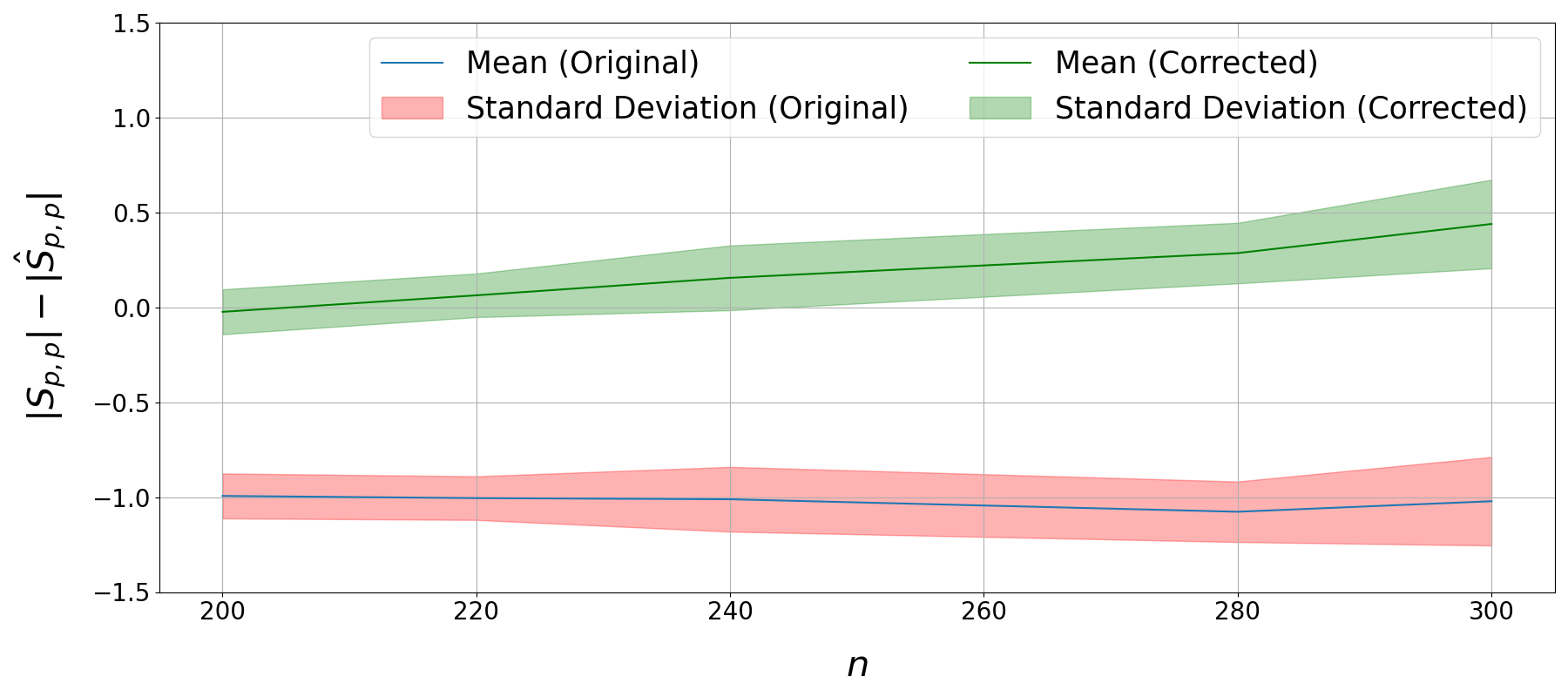}\label{fig:difference-beta2}}\quad
\subfloat[KL-divergence to the empirical distribution]{\includegraphics[width = 0.48\textwidth]{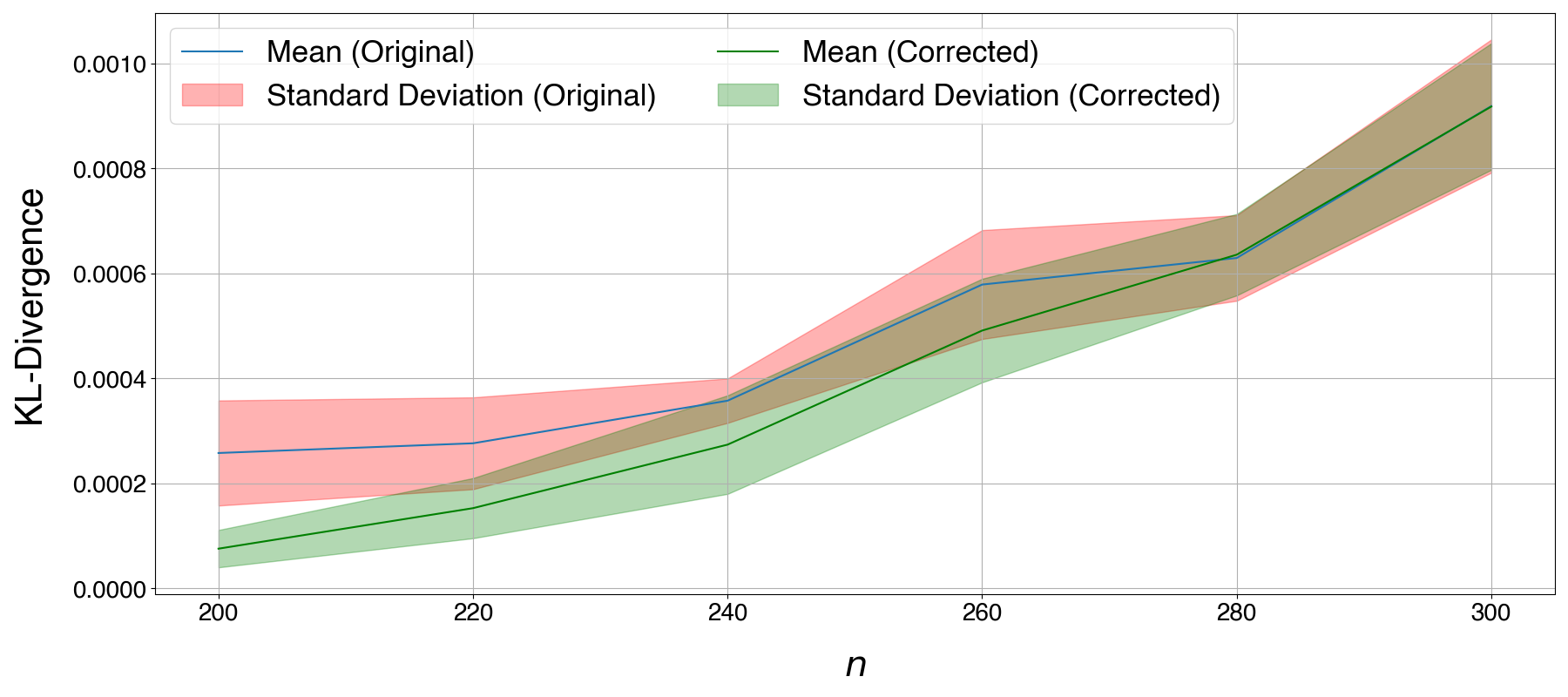}\label{fig:kl-divergence-beta2}}
\caption{(a) The uncorrected analytical $|S_{p,p}|$ is constantly underestimated, while the corrected analytical $|S_{p,p}|$ approximates the empirical $|\hat{S}_{p,p}|$ better. (b) The relation between the KL-divergence and $n_p$. The larger the $n_p$, the more common neighbors, and the worse approximation accuracy.}
\label{fig:common-neighbors-relaxation}
\end{figure*}

\subsection{Trivial Relaxations}

Some of the assumptions and conditions are trivial to satisfy. We summarize them as follows:\\

\noindent\textbf{Relaxation of Assumption 2} Data with continuous values and excessively many discrete values can be approximated by discretization (i.e., bucketization or binning). The approximation accuracy is dependent on the granularity of the buckets.\\

\noindent\textbf{Relaxation of Condition 1} This condition is automatically satisfied when the independence condition is satisfied and $k$ is not extremely large. In Lemma \ref{lemma:different-value-lemma} we use the minimum/maximum functions. That guarantees $n_q < n_p$. Then, as $k$ is usually a sufficiently large integer (e.g., the number of nearest neighbors), $n_q \ll kn_p$. Besides, when the independence condition is satisfied, $n_{r_{\max}} / N$ is sufficiently large, i.e. $N - n_{r_{\max}} \ll N$, thus $n_p < N - n_{r_{\max}} \ll N$. Then, as long as $k$ is not excessively large, $kn_p \ll N$.\\

\noindent\textbf{Relaxation of Condition 2} Violation of this condition does not significantly affect the approximation accuracy because when $n_p$ is small, its total contribution to the analytical mean and variance is also small.  

\section{Applications on Real-World Data} \label{sec:slope-experiment}

To further demonstrate the potential for practical application of our proposed method, we demonstrate how it can be applied to measure the surprisal of slope data via spatial self-information $J$.

The slope dataset\footnote{\url{https://ec.europa.eu/eurostat/web/gisco/geodata/reference-data/elevation/eu-dem/slope}} used here is obtained from the European Union statistics organization and covers the area of EU. The values in the data show slope values (i.e., $0\deg$ - $90\deg$) that are normalized into the range of 0 - 250. We split the original data into tiles of size $1000 \times 1000$, which represents a relatively homogeneous region, and further split the tiles into $50\times50$ patches. Due to the relatively small size of the patches, we bucketize the slope values with bin size 20 to avoid values that only appear very few times, i.e, merging values from $0$ to $19$ as $1$, values from $20$ to $39$ as $2$, and so forth until values from $240$ to $250$ as $13$. On the bucketized patches we compute the Moran's $I$ and the spatial self-information $J$.

\begin{figure}[ht!]
\centering
\setlength{\tabcolsep}{-1pt} 
\renewcommand{\arraystretch}{0.2} 
\begin{tabular}{ccccc}
  \includegraphics[width=0.20\textwidth]{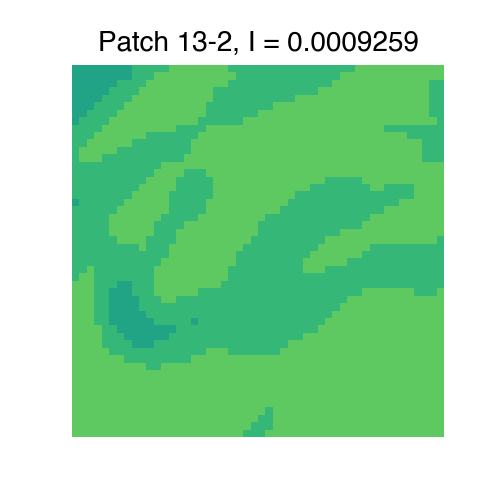} &   \includegraphics[width=0.20\textwidth]{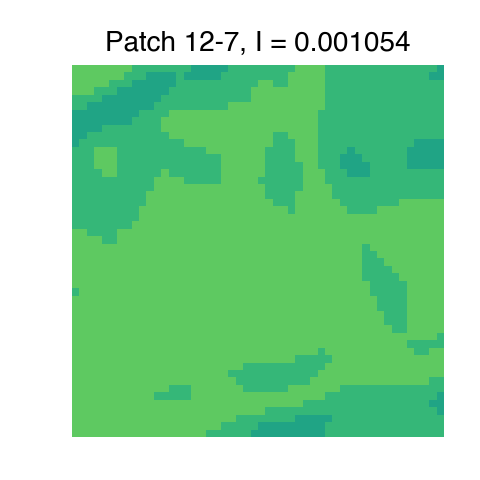} & \includegraphics[width=0.20\textwidth]{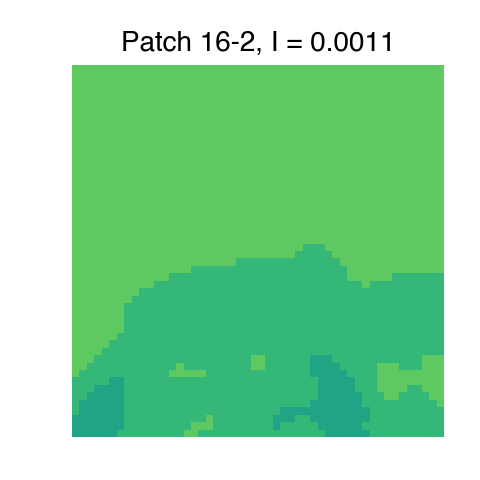} & \includegraphics[width=0.20\textwidth]{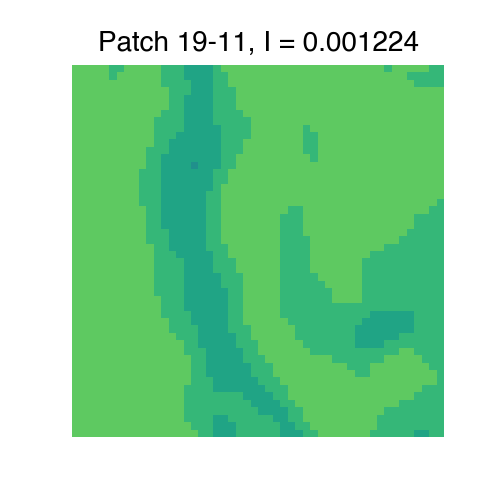} &   \includegraphics[width=0.20\textwidth]{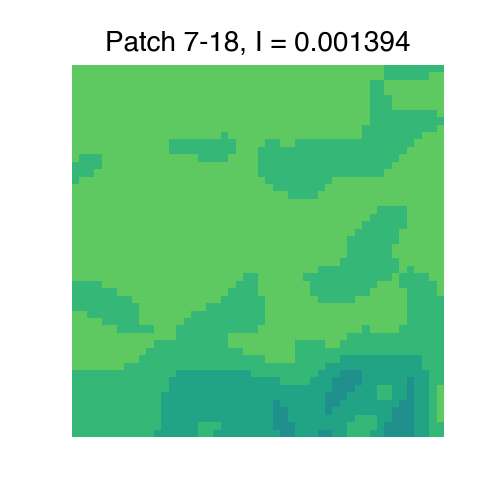}\\
 \includegraphics[width=0.20\textwidth]{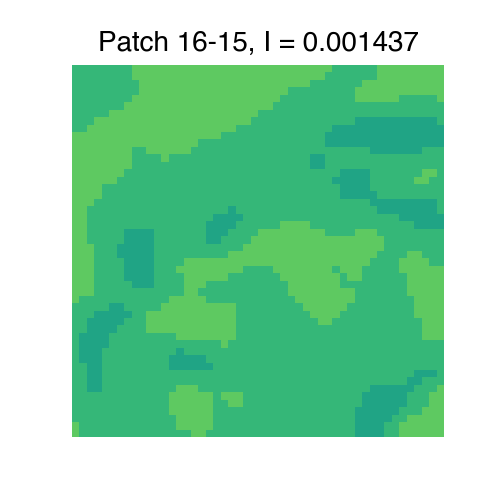} &   \includegraphics[width=0.20\textwidth]{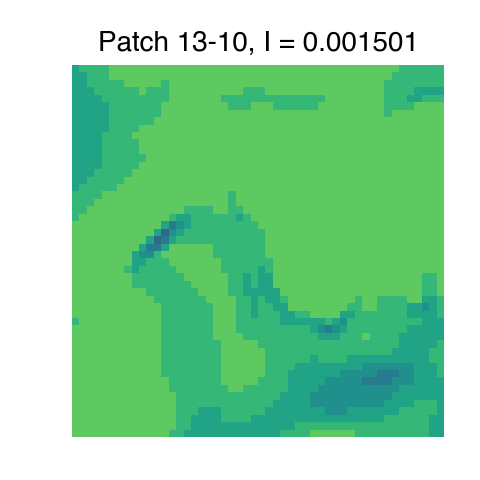} & \includegraphics[width=0.20\textwidth]{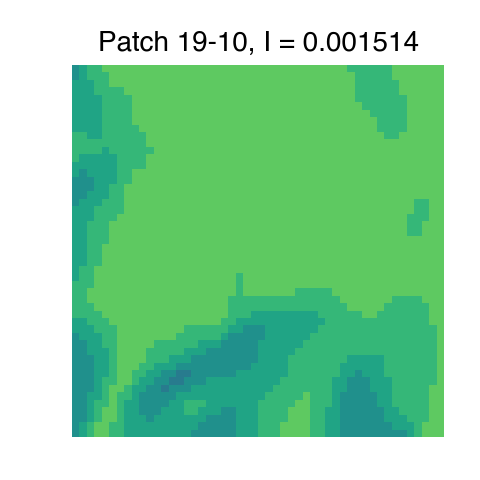} & \includegraphics[width=0.20\textwidth]{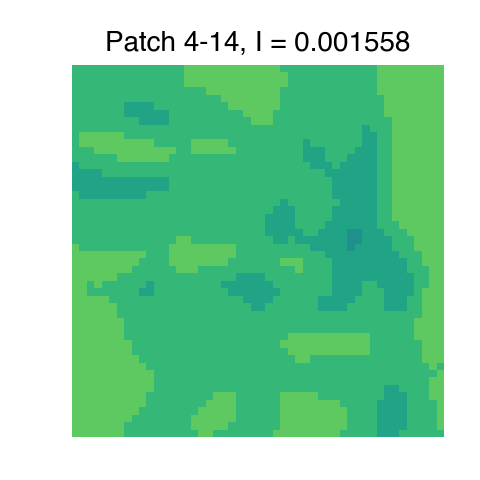} &   \includegraphics[width=0.20\textwidth]{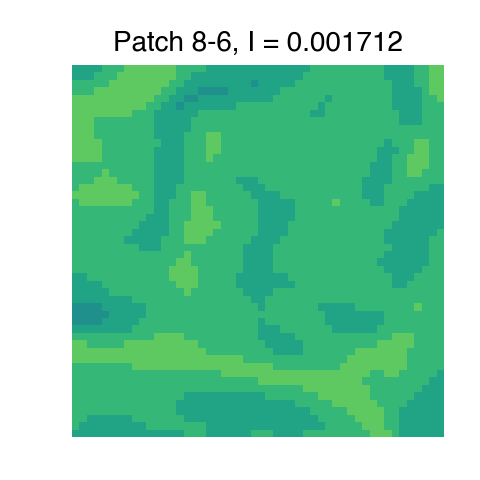}\\
 \includegraphics[width=0.20\textwidth]{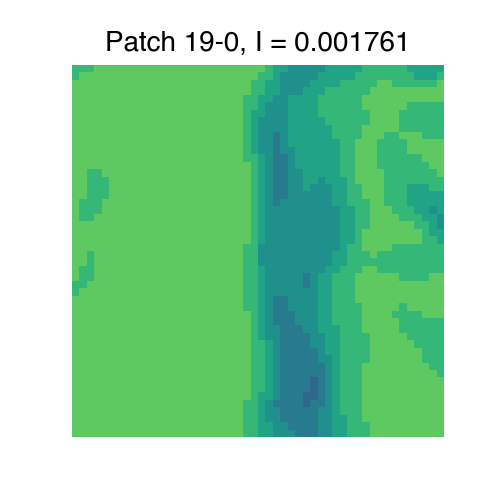} &   \includegraphics[width=0.20\textwidth]{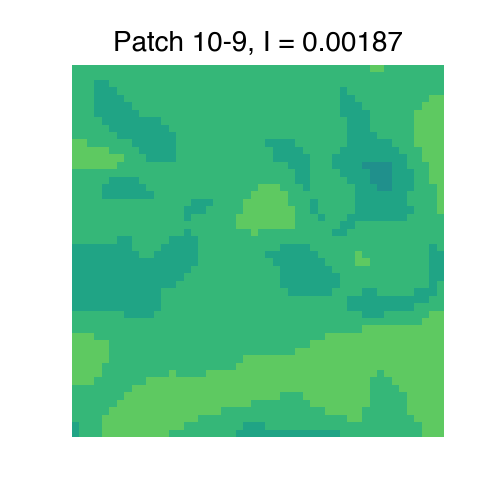} & \includegraphics[width=0.20\textwidth]{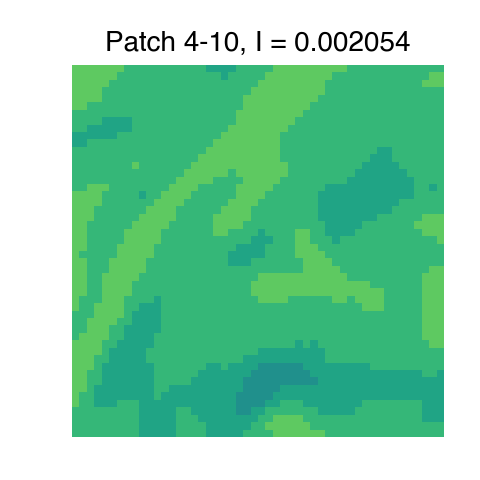} & \includegraphics[width=0.20\textwidth]{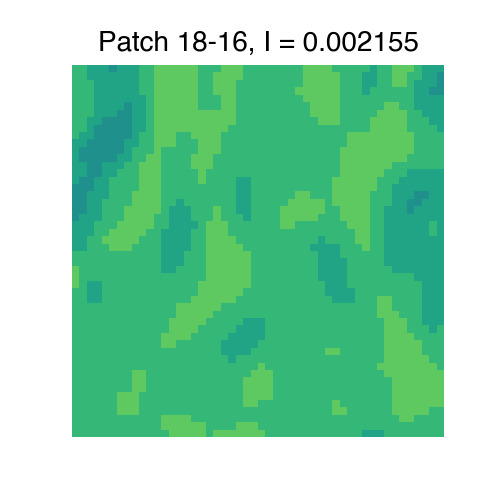} &   \includegraphics[width=0.20\textwidth]{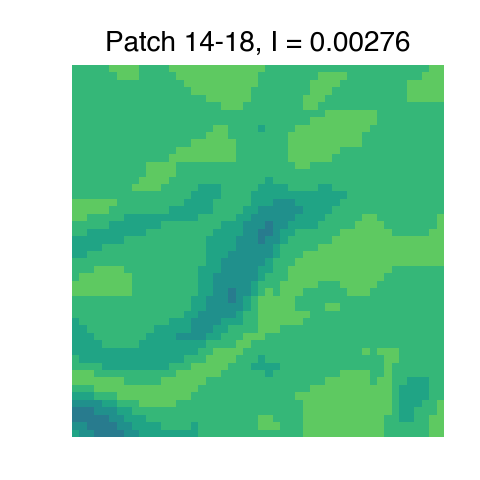}\\
\end{tabular}
\caption{Slope patches in ascending order of Moran's I from left to right and top to bottom.}
\label{fig:moran-I-on-slope}
\end{figure}
\begin{figure}[ht!]
\centering
\setlength{\tabcolsep}{-1pt} 
\renewcommand{\arraystretch}{0.2} 
\begin{tabular}{ccccc}
  \includegraphics[width=0.20\textwidth]{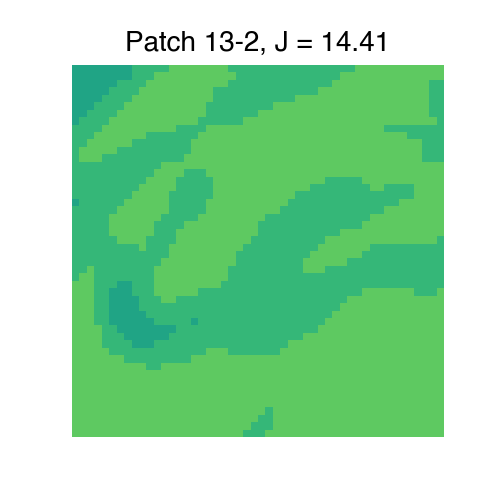} &   \includegraphics[width=0.20\textwidth]{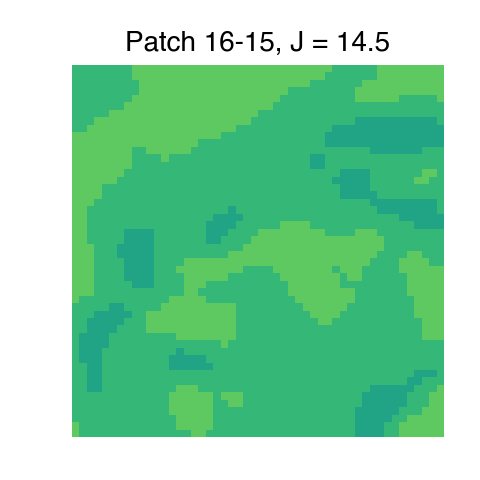} & \includegraphics[width=0.20\textwidth]{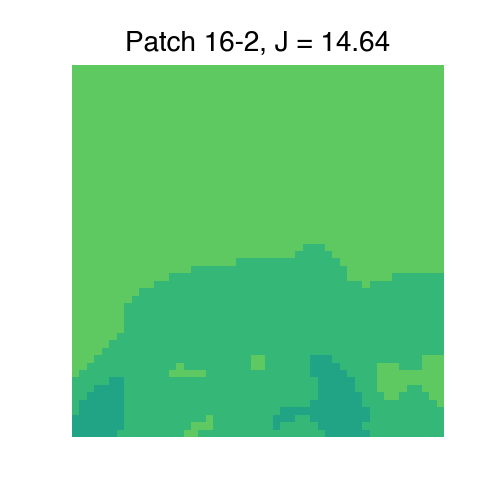} & \includegraphics[width=0.20\textwidth]{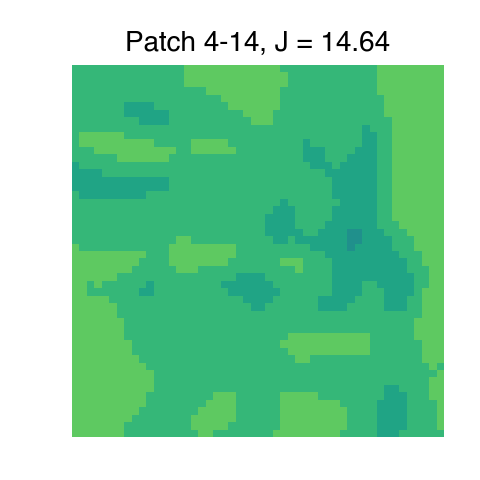} &   \includegraphics[width=0.20\textwidth]{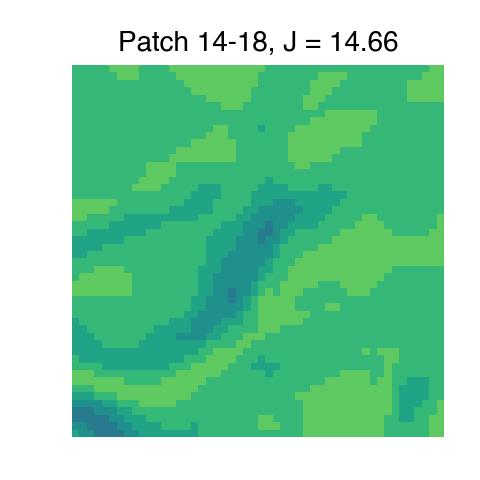}\\
 \includegraphics[width=0.20\textwidth]{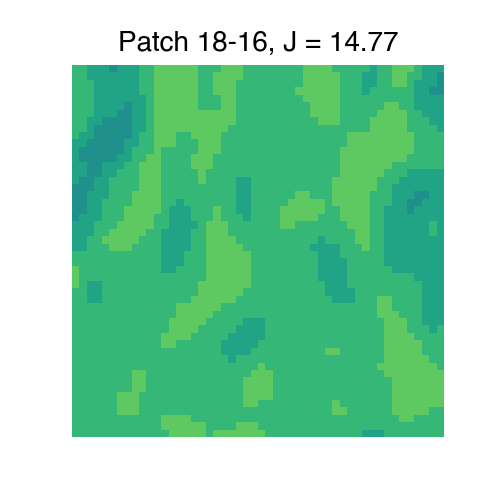} &   \includegraphics[width=0.20\textwidth]{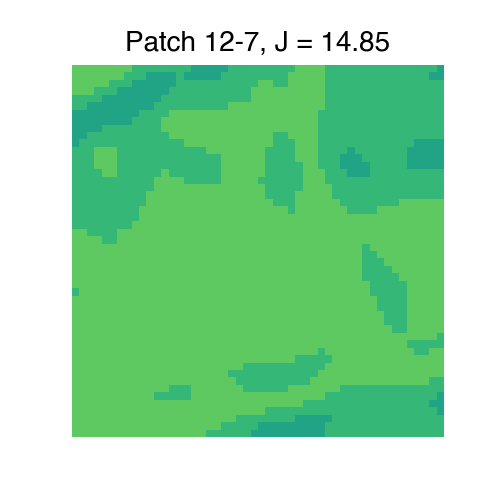} & \includegraphics[width=0.20\textwidth]{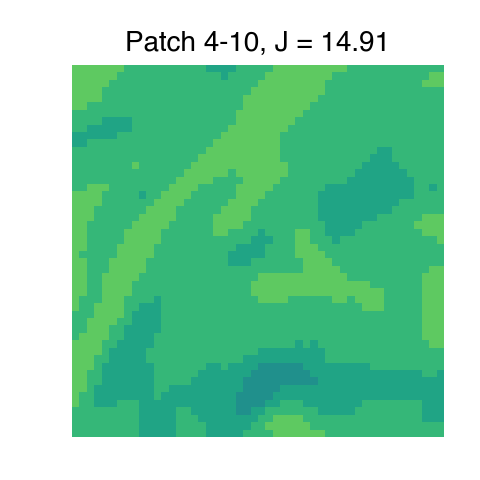} & \includegraphics[width=0.20\textwidth]{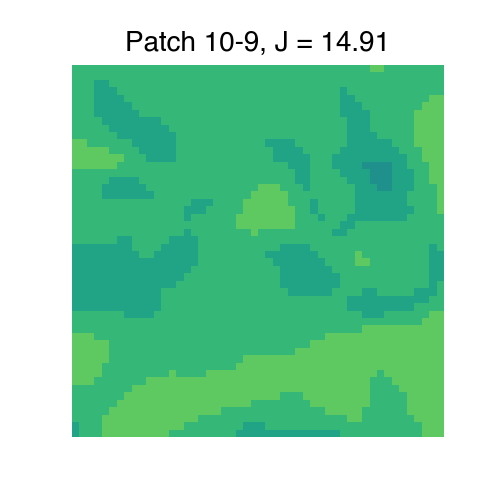} &   \includegraphics[width=0.20\textwidth]{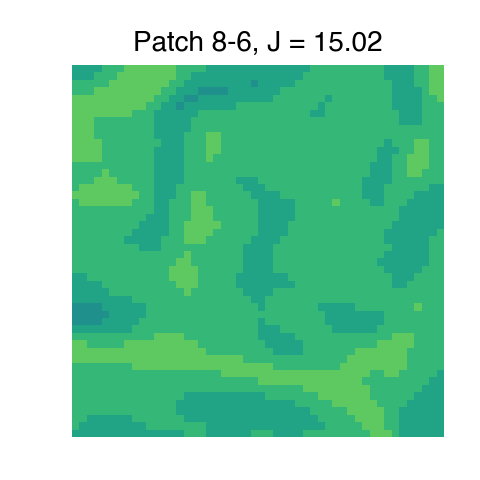}\\
 \includegraphics[width=0.20\textwidth]{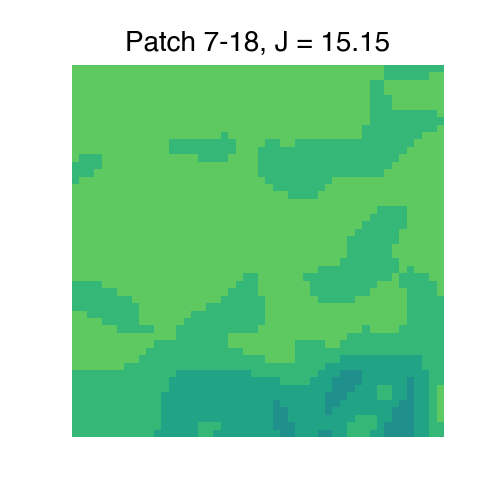} &   \includegraphics[width=0.20\textwidth]{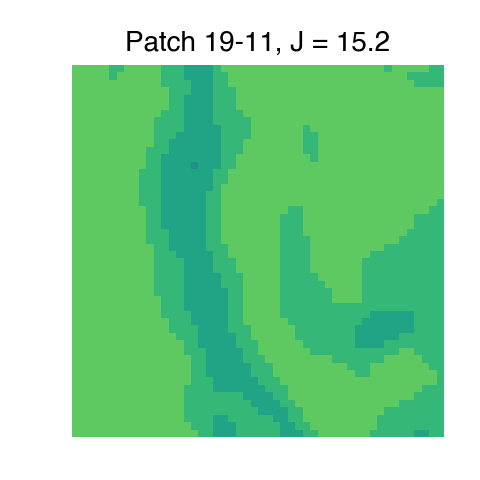} & \includegraphics[width=0.20\textwidth]{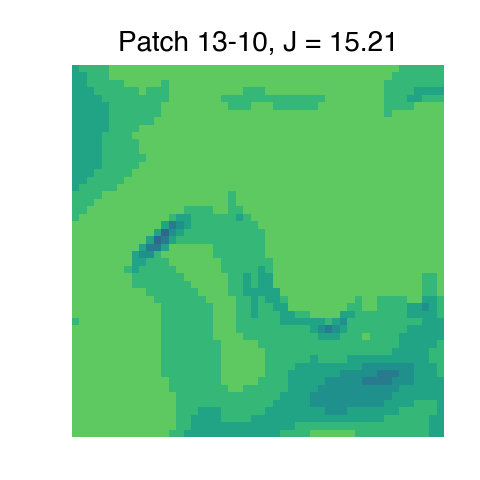} & \includegraphics[width=0.20\textwidth]{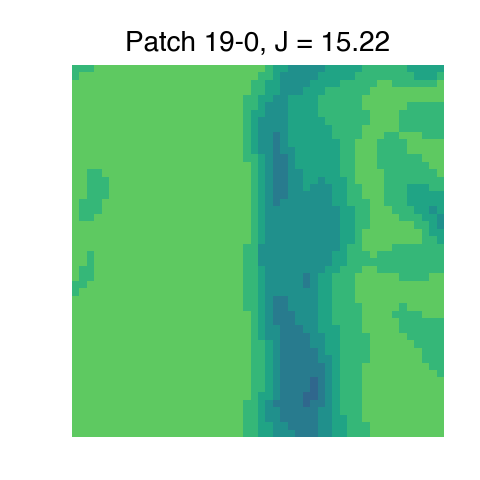} &   \includegraphics[width=0.20\textwidth]{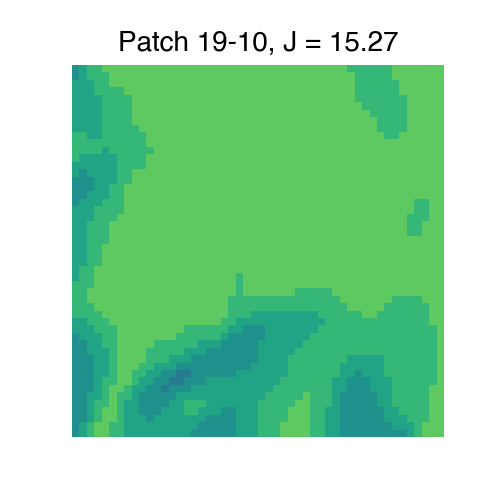}\\
\end{tabular}
\caption{Slope patches in ascending order of surprisal from left to right and top to bottom.}
\label{fig:surprisal-on-slope}
\end{figure}

As we have pointed out, Moran's I statistics with different value schemes are not directly comparable. We only know that statistically, a Moran's I value that is significantly above $1 / (N - 1)$ or below $-1 / (N - 1)$ indicates a presence of spatial autocorrelation, which in our case equals $1 / 2500 = 0.0004$. Instead, the spatial self-information provides a unified measure of spatial autocorrelation from the perspective of information content, regardless of the value schemes as long as the independence assumption holds. To demonstrate this, we randomly select 15 patches whose proportion of background ranges from $0.55$ to $0.65$, which we know from Figure \ref{fig:high-independence} guarantees high independence. We do not require their value schemes to be identical (which is impossible). In Figure \ref{fig:moran-I-on-slope}, the patches are plotted in ascending order of Moran's I. And in Figure \ref{fig:surprisal-on-slope}, the same patches are plotted, but in ascending order of spatial self-information $J$. We can see the two orders are not the same. For example, Patch 19-0 shows a sharp spatial pattern where a strip of mountains stands against a wide plane. In Figure \ref{fig:moran-I-on-slope}, however, it has lower positive Moran's I than Patch 10-9, which is more scattered and should have weaker spatial autocorrelation. This phenomenon occurs because Patch 10-9 has larger areas of deeper green (higher values), which inflates the variance of Moran's I with its value scheme. Our spatial self-information $J$, on the contrary, is able to correctly capture the strength of spatial autocorrelation. In Figure \ref{fig:surprisal-on-slope}, Patch 19-0 has the second largest spatial self-information (15.22) compared to Patch 10-9 (14.91), meaning it is $e^{15.22 - 14.9} = 1.36$ times easier to observe Patch 10-9 than Patch 19-0, i.e., Patch 19-0 is more surprising.

\section{Conclusions and Future Work} \label{sec:conclusion}

In this paper, we theoretically derive the asymptotic analytical distribution of global Moran's I in the case of binary weights, under a series of broad randomness assumptions. We further develop a comprehensive set of techniques that efficiently computes the approximate probability and self-information of a spatial sample and corrects the error caused by violations of assumptions. Both synthetic and real-world experiments show that our approximation remains accurate and robust even if the assumptions and conditions are not ideally satisfied. Our research provides practical means to measure the information loss in spatially distributed data due to the presence of spatial autocorrelation with applications in spatial data analysis, GeoAI models, and general machine learning/deep learning. For future work, it is worth exploring to \textbf{(1)} relax the independence assumption to enable more accurate approximation on highly scattered spatial data such as maps of POIs, \textbf{(2)} derive a non-binary weight version, and \textbf{(3)} study different settings such as continuous value surfaces and continuous entropy. Finally, while our work was centered around Moran's I, similar ideas likely generalize to related concepts such as the semivariogram which could be expressed as increased entropy by distance. 

\bibliography{lipics-v2021-sample-article}

\appendix

\end{document}